\newlang{\searchPM}{\textsc{Search-PM}}
\newlang{\decisionPM}{\textsc{PM}}
\newlang{\MM}{\textsc{MM}}
\newlang{\weightPM}{\textsc{weight-PM}}
\newlang{\minwtPM}{\textsc{Min-Weight-PM}}
\DeclareMathOperator{\conv}{conv} 
\DeclareMathOperator{\sign}{sgn}
\DeclareMathOperator{\Span}{span}
\DeclareMathOperator{\lcm}{lcm}
\newcommand{\QuasiNC}{\text{\rm quasi-}{\sf NC}}
\newtheorem{theorem}{Theorem}[section]
\newtheorem{lemma}[theorem]{Lemma}
\newtheorem{corollary}[theorem]{Corollary}
\newtheorem{claim}{Claim}
\newtheorem*{theorem*}{Theorem}
\theoremstyle{definition}
\newtheorem*{definition*}{Definition}
\newcommand{\set}[2]{\{\,#1\mid#2\,\}}
\newcommand{\eq}{2em}
\newcommand{\abs}[1]{\lvert #1 \rvert}
\newcommand{\Z}{\mathbb Z}
\newcommand{\xM}{{\boldsymbol x}^M}
\newcommand{\xMp}{{\boldsymbol x}^{M'}}
\newcommand{\x}{{\boldsymbol x}} 
\newcommand{\y}{{\boldsymbol y}}
\newcommand{\w}{{\boldsymbol w}} 
\newcommand{\X}{{\boldsymbol X}}  
\newcommand{\PM}{{\rm PM}} 
\renewcommand{\R}{\mathbb R}
\newcommand{\wc}{c} 
\newcommand{\symdiff}{\mathop{\scriptstyle \triangle}}
\newcommand{\comment}[1]{} 
\title{Bipartite Perfect Matching is in quasi-NC}
\author[1]{Stephen Fenner}
\author[2]{Rohit Gurjar$^*$}
\author[2]{Thomas Thierauf\thanks{Supported by DFG grant TH 472/4}}
\affil[1]{University of South Carolina}
\affil[2]{Aalen University, Germany}
\date{\today}
\begin{document}

\maketitle

\begin{abstract}
We show that the bipartite perfect matching problem is in $\QuasiNC^2$.
That is, it has uniform circuits of quasi-polynomial size~$n^{O(\log n)}$, and $O(\log^2 n)$ depth.
Previously, only an exponential upper bound was known on the size of such circuits with poly-logarithmic depth. 

We obtain our result by an almost complete derandomization of the famous Isolation Lemma
when applied to yield an efficient randomized parallel algorithm for the bipartite perfect matching problem.
\end{abstract}


\section{Introduction}

The perfect matching problem has been widely studied in complexity theory.
It has been of particular interest in the study of derandomization and parallelization. 
The perfect matching problem, $\decisionPM$, asks whether
a given graph contains a perfect matching. 
\comment{
In other words, say for a group of people we are given 
connections of the form `person $x$ and person $y$ know each other' and
the question is whether one can make pairs of persons known to each other such that
every person is paired with someone. 
}

The problem has a polynomial-time algorithm due to Edmonds~\cite{Edm65}.
However, its parallel complexity is still not completely resolved as of today.
The problem can be solved by randomized efficient parallel algorithms due to Lov{\'a}sz~\cite{Lov79}, 
i.e., it is in~$\RNC$,
but it is not known whether randomness is necessary,
i.e., whether it is in~$\NC$.
The class~$\NC$ represents the problems which have efficient parallel algorithms, i.e.,
they have uniform circuits of polynomial size and poly-logarithmic depth.
For the perfect matching problem, nothing better than an exponential-size circuit was known,
in the case of poly-logarithmic depth. 

The construction version of the problem, $\searchPM$, asks to construct a perfect matching
in a graph if one exists.
It is in~$\RNC$ due to 
Karp et al.~\cite{KUW86} and Mulmuley et al.~\cite{MVV87}.
The latter algorithm applies the celebrated Isolation Lemma.
Both algorithms work with a weight assignment on the edges of the graph.
A weight assignment is called \emph{isolating} for a graph~$G$ if 
the minimum weight perfect matching in~$G$  is unique, if one exists.
Mulmuley et al.~\cite{MVV87} showed that 
given an isolating weight assignment with polynomially bounded integer weights for a graph~$G$,  
then a perfect matching in~$G$ can be constructed in~$\NC$.
To get an isolating weight assignment they use randomization.
This is where the Isolation Lemma comes into play.

\begin{lemma}[Isolation Lemma~\cite{MVV87}]
For a graph $G(V,E)$, let $w$ be a random weight assignment, where edges
are assigned weights chosen uniformly and independently at random from $\{1,2, \dots, 2\abs{E}\}$. 
Then~$w$ is isolating with probability~$\geq 1/2$.
\end{lemma}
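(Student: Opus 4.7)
The plan is to control the probability by identifying, for each edge, a single ``bad'' event that must occur if that edge obstructs uniqueness, and then applying a union bound over $E$. I will call an edge $e$ \emph{ambiguous} if it lies in the symmetric difference $M_1 \symdiff M_2$ of some two minimum weight perfect matchings $M_1, M_2$. If the minimum weight perfect matching is not unique, then any two distinct optimal matchings differ somewhere, so at least one edge is ambiguous. (If $G$ admits no perfect matching at all, the conclusion holds vacuously.) It therefore suffices to bound the probability that some edge is ambiguous.

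To analyze a fixed edge $e$, I would condition on the weights of the remaining edges. Let $\alpha_e$ denote the minimum of $w(M) - w(e)$ over all perfect matchings $M$ containing $e$, and let $\beta_e$ denote the minimum of $w(M)$ over all perfect matchings $M$ not containing $e$ (if either of these sets is empty then $e$ cannot possibly be ambiguous, so I may assume both are nonempty). The key observation is that both $\alpha_e$ and $\beta_e$ are determined entirely by the weights on $E \setminus \{e\}$, so they are independent of $w(e)$. The minimum weight of a perfect matching containing $e$ equals $\alpha_e + w(e)$, whereas the minimum weight of a perfect matching not containing $e$ equals $\beta_e$. Edge $e$ is ambiguous precisely when these two quantities coincide, which pins down $w(e) = \beta_e - \alpha_e$: a single value out of $2\abs{E}$.

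Since $w(e)$ is drawn uniformly from $\{1, 2, \ldots, 2\abs{E}\}$ independently of the other weights, the conditional probability that $w(e)$ hits this one threatening value is at most $1/(2\abs{E})$, and this bound is preserved after averaging over the other weights. A union bound over the $\abs{E}$ edges then gives that the probability of some edge being ambiguous is at most $\abs{E} \cdot 1/(2\abs{E}) = 1/2$, so $w$ is isolating with probability at least $1/2$. I do not foresee a real obstacle: the whole argument hinges on the elementary point that $\beta_e - \alpha_e$ is a fixed number once the other weights are, which converts isolation into a routine counting bound over the weight range.
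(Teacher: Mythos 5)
Your proof is correct and is precisely the standard argument from Mulmuley, Vazirani, and Vazirani, which the paper simply cites without reproving. The key step — showing that an edge $e$ is ambiguous only if $w(e)$ equals the threshold $\beta_e - \alpha_e$, a quantity determined by the other weights, followed by a union bound over edges — is exactly the original derivation, and you have handled the necessary side conditions (the two families of matchings both being nonempty, and the threshold possibly falling outside the sampling range) properly.
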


\emph{Derandomizing} this lemma means to construct such a weight assignment deterministically in~$\NC$.
This remains a challenging open question. 
A general version of this lemma, which considers a family of sets and
requires a unique minimum weight set, has also been studied. 
The general version is related to the polynomial identity testing problem and 
circuit lower bounds~\cite{AM08}.

The Isolation Lemma has been derandomized for some special classes of graphs, 
e.g., planar bipartite graphs~\cite{DKR10,TV12},
 strongly chordal graphs~\cite{DK98}, 
graphs with a small number of perfect matchings~\cite{GK87,AHT07}. 
In this work, we make a significant step towards the derandomization of the Isolation
Lemma for bipartite graphs.
In Section~\ref{sec:isolation}, we construct an isolating weight assignment for these graphs
with quasi-polynomially large weights. 
Previously, the only known deterministic construction was the trivial one that used exponentially large weights.
As a consequence we get that for bipartite graphs, $\decisionPM$ and $\searchPM$ are in $\QuasiNC^2$.
In particular,
they can be solved by uniform Boolean circuits of depth~$O(\log^2 n)$
and size~$n^{O(\log n)}$ for graphs with~$n$ nodes.
Note that the size is just one $\log n$-exponent away from polynomial size.

Our result also gives an $\RNC$-algorithm for $\decisionPM$ in bipartite graphs
which uses very few random bits.
The original $\RNC$-algorithm of Lov{\'a}sz~\cite{Lov79} uses $O(m \log n)$ random bits.
This has been improved by Chari, Rohatgi, and Srinivasan~\cite{CRS95} to $O(n \log (m/n))$ random bits.
They actually construct an isolating weight assignment using these many random bits. 
To the best of our knowledge,
the best upper bound today on the number of random bits is $(n+ n \log (m/n))$ by Chen and Kao~\cite{CK97},
that is, the improvement to~\cite{CRS95} was only in the multiplicative factor.
In Section~\ref{sec:RNC}, we achieve
an \emph{exponential} step down to~$O(\log^2 n)$ random bits.
Note that this is close to a complete derandomization which would be achieved
when the number of random bits comes down to~$O(\log n)$.
This improves an earlier version of this work, 
where we had an $\RNC$-algorithm with~$O(\log^3 n)$ random bits. 

Based on the first version of our paper, Goldwasser and Grossman~\cite{GG15} observed that 
one can get an $\RNC$-algorithm for $\searchPM$ which uses $O(\log^4 n)$ random bits. 
With our improved decision algorithm,
we obtain now an $\RNC$-algorithm for $\searchPM$ which uses only~$O(\log^2 n)$ random bits.

In Section~\ref{sec:extensions} we show that
our approach also gives an alternate $\NC$-algorithm for $\searchPM$ in bipartite planar graphs.
This case already has known $\NC$-algorithms~\cite{MN95,MV00,DKR10}.
Our algorithm is in~$\NC^3$, 
while the previous best known upper bound is already~$\NC^2$~\cite{MN95,DKR10}.

We give a short outline of the main ideas of our approach.
For any two perfect matchings of a graph~$G$, the edges where they differ form disjoint cycles. 
For a cycle~$C$, its circulation is defined to be the difference of weights of two perfect matchings
which differ exactly on the edges of~$C$.
Datta et al.~\cite{DKR10} showed that a weight assignment which ensures nonzero circulation 
for every cycle is isolating. 
It is not clear if there exists such a weight assignment with small weights.
Instead, we use a weight function that has nonzero circulations only for \emph{small} cycles. 
Then, we consider the subgraph~$G'$ of~$G$ which is the union of minimum weight perfect matchings in~$G$. 
In the bipartite case, 
graph~$G'$  is significantly smaller than the original graph~$G$.
In particular, we show that~$G'$ does not contain any cycle with a nonzero circulation.
This means that~$G'$ does not contain any small cycles.

Next, we show that for a graph which has no cycles of length~$< r$,
the number of cycles of length~$<2r$ is polynomially bounded.
This motivates the following strategy which works in~$\log n$ rounds:
in the $i$-th round, assign weights which ensure nonzero circulations for all cycles with length $<2^i$.
Since the graph obtained after $(i-1)$-th rounds has no cycles of length~$<2^{i-1}$,
the number of cycles of length~$<2^i$ is small.
In~$\log n$ rounds, we get a unique minimum weight perfect matching.

\section{Preliminaries}

\subsection{Matchings and Complexity}
By $G(V,E)$ we denote a graph with vertex set~$V$ of size $|V| =n$
and  edge set~$E$ of size $|E| =m$.
We consider only undirected graphs in this paper.
A graph is \emph{bipartite} if 
there exists a partition $V = L \cup R$ of the vertices such that
all edges are between vertices of~$L$ and~$R$.

In a graph $G(V,E)$, a \emph{matching} $M \subseteq E$ is a subset of edges with
no two edges sharing an endpoint.
A matching which covers every vertex is called a \emph{perfect matching}.
For any weight assignment $w \colon E \to \Z$ on the edges of a graph, the
\emph{weight of a matching}~$M$ is defined to be the sum of weights of all the edges in~$M$,
i.e., $w(M) = \sum_{e \in M} w(e)$.

A weight function~$w$ is called \emph{isolating for}~$G$,
if there is a unique perfect matching of minimum weight in~$G$.

A graph~$G$ is \emph{matching-covered} if each edge in~$G$ participates in some perfect matching.
In the literature, matching-covered is also called 1-\emph{extendable} 
and these notions require~$G$ to be \emph{connected}.
{\bf Note}:
in this paper,
we use matching-covered also for non-connected graphs!

The \emph{perfect matching problem} $\decisionPM$ is to decide
whether a given graph has a perfect matching.
Its construction version $\searchPM$ is to compute a perfect matching of a given graph,
or to determine that no perfect matching exists.
A \emph{bipartite} graph~$G(V,E)$ with vertex partition $V = L \cup R$
can have a perfect matching only when $|L| = |R| = n/2$.
Hence,
when we consider bipartite graphs, we will always assume such a partition.

Analogous to $\NC^k$, Barrington~\cite{Bar92} defined the class $\QuasiNC^k$
as the class of problems which have 
uniform circuits of quasi-polynomial size~$2^{\log^{O(1)} n}$ and poly-logarithmic 
depth~$O(\log^k n)$.
Here, \emph{uniformity} means that local queries about the circuit
can be answered in poly-logarithmic time (see~\cite{Bar92} for details). 
The class $\QuasiNC$ is the union of classes~$\QuasiNC^k$, over all~$k \geq 0$.


\subsection{An $\RNC$ algorithm for $\searchPM$}
\label{sec:mvv}

Let us first recall the $\RNC$ algorithm of Mulmuley, Vazirani \& Vazirani~\cite{MVV87} 
for the construction of a perfect matching ($\searchPM$).
Though the algorithm works for any graph, we will only consider bipartite graphs here.

Let $G$ be a bipartite graph with vertex partitions 
$L =\{u_1, u_2, \dots, u_{n/2}\}$ and $R =\{v_1, v_2, \dots, v_{n/2} \}$,
and weight function~$w$.
Consider the following $n/2 \times n/2$ matrix~$A$ associated with~$G$,
$$A(i,j) = \begin{cases}
	2^{w(e)}, & \text{if } e = (u_i, v_j) \in E, \\
	0, & \text{otherwise.}
\end{cases}$$

The algorithm in~\cite{MVV87} computes the determinant of~$A$.
An easy argument shows that this determinant is
the signed sum over all perfect matchings in~$G$:
\begin{eqnarray}
\det(A) &=& \sum_{\pi \in S_{n/2}} \sign(\pi) \prod_{i=1}^{n/2} A(i,\pi(i))
\label{eq:det}\\
&=& \sum_{M \text{ pm in } G} \sign(M)\, 2^{w(M)}
\label{eq:mat}
\end{eqnarray}

Equation~(\ref{eq:mat}) holds because the product $\prod_{i=1}^{n/2} A(i,\pi(i))$
is nonzero if and only if the permuation~$\pi$ corresponds to a perfect matching.
Here $\sign(M)$ is the sign of the corresponding permutation.
If the graph~$G$ does not have a perfect matching, then clearly $\det(A)=0$.
However, even when the graph has perfect matchings, there can be cancellations
due to~$\sign(M)$, and~$\det(A)$ may become zero.
To avoid such cancellations, one needs to design the weight function~$w$ cleverly.
In particular, if~$G$ has a perfect matching
and~$w$ is isolating,
then $\det(A) \neq 0$.
This is because the term~$2^{w(M)}$ corresponding to the minimum weight perfect matching
cannot be canceled with other terms, which are strictly higher powers of~$2$.

Given an isolating weight assignment for~$G$, one can easily construct the minimum 
weight perfect matching in~$\NC$.
Let~$M^*$ be the unique minimum weight perfect matching in~$G$.
First we find out~$w(M^*)$ by looking at the highest power of~$2$ dividing~$\det(A)$.
Then for every edge $e \in E$,
compute the determinant of the matrix~$A_e$ associated with~$G-e$.
If the highest power of~$2$ that divides~$\det(A_e)$ is larger than~$2^{w(M^*)}$, then $e \in M^*$.
Doing this in parallel for each edge, we can find all the edges in~$M^*$.

As already explained in the introduction,
the Isolation Lemma delivers the isolating weight assignment with high probability.
Moreover,
the weights chosen by the Isolation Lemma are polynomially bounded.
Therefore,
the entries in matrix~$A$  have polynomially many bits.
This suffices to compute the determinant in $\NC^2$~\cite{Ber84}.
Hence, also the construction is in~$\NC^2$.
Put together, this yields an $\RNC$-algorithm for $\searchPM$.


\subsection{The Matching Polytope} 
Matchings are also one of the well-studied objects in polyhedral combinatorics. 
Matchings have an associated polytope, called the \emph{perfect matching polytope}.
We use some properties of this polytope to construct an isolating weight assignment.
The perfect matching polytope also
forms the basis of one of the $\NC$-algorithms
for bipartite planar matching~\cite{MV00}.

The perfect matching polytope~$\PM(G)$ of a graph~$G(V,E)$ with $\abs{E} = m$ edges
\index{perfect matching polytope}
is a polytope in the edge space, i.e., $\PM(G) \subseteq \R^{m}$.
For any perfect matching~$M$ of~$G$, 
consider its incidence vector $\xM = (x^M_e)_e \in \R^{m}$ given by
$$x^M_e = \begin{cases} 
		1, & \text{if } e \in M,\\
		0, & \text{otherwise.}
\end{cases}
$$
This vector is referred as a \emph{perfect matching point} for any perfect matching~$M$.
The \emph{perfect matching polytope} of a graph $G$ is defined to be the convex hull 
of all its perfect matching points,
\[
 \PM(G) = \conv\set{\xM}{M \text{ is a perfect matching in } G}.
 \]

Any weight function $w \colon E \to \R$ on the edges of a graph~$G$ can be naturally
extended to~$\R^{m}$ as follows: for any $\x = (x_e)_e \in \R^m$, define
$$w(\x) = \sum_{e\in E} w(e)\, x_e. $$
Clearly, for any matching $M$, we have $w(M) = w(\xM)$.
In particular,
let~$M^*$ be a perfect matching in~$G$ of minimum weight.
Then
\[
w(M^*) = \min \set{w(\x)}{\x \in \PM(G)}.
\]
The following lemma gives a  simple description of the perfect matching polytope
of a bipartite graph~$G$
which is well known, see for example~\cite{LP86}.

\begin{lemma}
\label{lem:polytope}
Let $G$ be a bipartite graph and $\x = (x_e)_e \in \R^m$.
Then $\x \in \PM(G)$ if and only if 
\begin{eqnarray}
\sum_{e \in \delta(v)} x_e &=& 1 \hspace{\eq} v \in V, \label{eq:PMsum}\\
x_e &\geq& 0  \hspace{\eq} e \in E, \label{eq:PMpositive}
\end{eqnarray}
where $\delta(v)$ denotes the set of edges incident on the vertex $v$.
\end{lemma}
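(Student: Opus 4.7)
The plan is to handle the two directions separately. The forward direction ($\Rightarrow$) is immediate: for any perfect matching $M$, the incidence vector $\xM$ satisfies (\ref{eq:PMsum}) because exactly one edge of $M$ meets each vertex, and it satisfies (\ref{eq:PMpositive}) because its entries lie in $\{0,1\}$. Both conditions are preserved under convex combinations, so every point of $\PM(G)$ satisfies them.

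For the reverse direction ($\Leftarrow$), I would argue by induction on the support size $|\{e \in E : x_e > 0\}|$. The key step is to produce a perfect matching $M$ inside the support $E' = \{e \in E : x_e > 0\}$. Summing (\ref{eq:PMsum}) separately over $L$ and over $R$ and using bipartiteness gives $|L| = |R| = \sum_e x_e$, so the two sides are balanced. To apply Hall's theorem in the support graph $(L \cup R, E')$, fix $S \subseteq L$ and let $N(S)$ denote its neighborhood in $E'$; then
\[
|S| \;=\; \sum_{v \in S} \sum_{e \in \delta(v)} x_e \;\leq\; \sum_{v \in N(S)} \sum_{e \in \delta(v)} x_e \;=\; |N(S)|,
\]
since every edge counted on the left has its $R$-endpoint in $N(S)$. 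Hall's theorem then yields a perfect matching $M \subseteq E'$.

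Given such an $M$, set $\alpha = \min_{e \in M} x_e > 0$. Constraint (\ref{eq:PMsum}) forces $x_e \leq 1$ for every edge, so $\alpha \leq 1$. If $\alpha = 1$, then (\ref{eq:PMsum}) pins every non-matching coordinate to $0$ and $\x = \xM \in \PM(G)$. Otherwise, the vector $\y = (\x - \alpha\,\xM)/(1-\alpha)$ again satisfies (\ref{eq:PMsum}) (as an affine combination of solutions with coefficients summing to $1$) and (\ref{eq:PMpositive}) (since $\alpha \leq x_e$ on $M$, and entries outside $M$ are only rescaled by a positive factor); moreover $\y$ vanishes on any edge of $M$ attaining the minimum, so its support is strictly smaller than that of $\x$. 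By the inductive hypothesis $\y \in \PM(G)$, and the identity $\x = \alpha\,\xM + (1-\alpha)\,\y$ exhibits $\x$ as a convex combination of perfect matching points, hence $\x \in \PM(G)$.

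The only substantive input is Hall's theorem, which is precisely where bipartiteness is essential: for general graphs the polytope cut out by (\ref{eq:PMsum})--(\ref{eq:PMpositive}) has fractional vertices (e.g., the half-integral solution on an odd cycle), and one must additionally impose Edmonds's odd-cut inequalities. I expect the main thing to keep honest in writing this out is the double-counting step that establishes Hall's condition from (\ref{eq:PMsum}); the rest is the standard ``peel off a matching'' induction.
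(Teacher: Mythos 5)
Your proof is correct, and it is worth noting that the paper itself does not prove this lemma: it checks only the easy direction (perfect matching vectors satisfy the constraints, and the constraints are preserved under convex combinations) and defers the converse to \cite[Chapter~7]{LP86}. So there is no in-paper argument to compare against; what you have supplied is a self-contained proof of the cited fact.

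Your argument is the classical ``peel off a matching'' induction. The Hall-condition computation is sound: for $S \subseteq L$, bipartiteness lets you rewrite $\sum_{v\in S}\sum_{e\in\delta(v)}x_e$ as a sum of $x_e$ over edges meeting $S$, each of which also meets $N(S)$, and nonnegativity gives the inequality against $\sum_{v\in N(S)}\sum_{e\in\delta(v)}x_e = |N(S)|$. The balance $|L|=|R|$ is needed so that a matching saturating $L$ is perfect, and you correctly derive it. The subtraction step preserves both constraint families (affine combination with coefficients summing to $1$ for the equalities; the choice $\alpha=\min_{e\in M}x_e$ for nonnegativity), strictly shrinks the support, and the recursion bottoms out exactly when $\alpha=1$, i.e., when $\x$ is itself a matching point. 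An alternative route, more common in polyhedral-combinatorics texts, is to observe that the vertex--edge incidence matrix of a bipartite graph is totally unimodular, so every vertex of the polytope cut out by (\ref{eq:PMsum})--(\ref{eq:PMpositive}) is a $0/1$ vector and hence a perfect matching point; your Hall-based argument is more elementary and also directly exhibits the convex decomposition. Your closing remark about odd cycles correctly identifies where bipartiteness enters and why the characterization fails for general graphs without the odd-set inequalities.
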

It is easy to see that any perfect matching point will satisfy these two conditions.
In fact, all perfect matching points are vertices of this polytope.
The non-trivial part is to show that any point satisfying these two conditions is in 
the perfect matching polytope~\cite[Chapter~7]{LP86}.
For general graphs, the polytope described by~(\ref{eq:PMsum}) and~(\ref{eq:PMpositive})
can have vertices which are not perfect matchings. 
Thus, the description does not capture the perfect matching polytope for general graphs.


\subsection{Nice Cycles and Circulation}

Let $G(V,E)$ be a graph with a perfect matching.
A cycle~$C $ in~$G$ is a \emph{nice cycle},
if the subgraph $G - C$ still has a perfect matching. 
In other words, 
a nice cycle can be obtained from the symmetric difference of two perfect matchings.
Note that a nice cycle is always an even cycle. 

For a weight assignment~$w$ on the edges, 
the \emph{circulation}~$\wc_w(C)$ of an even length cycle $C = (v_1,v_2, \dots, v_k)$ 
is defined as the alternating sum of the edge weights of~$C$,
\[
\wc_w(C) =  \abs{w(v_1,v_2) - w(v_2,v_3) +w(v_3,v_4)- ~ \cdots~ - w(v_k,v_1)}.
\]
The definition is independent of the edge we start with
because we take the absolute value of the alternating sum.

The circulation  of nice cycles was one crucial ingredient of 
the isolation in bipartite planar graphs given by Datta et al.~\cite{DKR10}.

\begin{lemma}[\cite{DKR10}]
\label{lem:circulation}
Let $G$ be a graph with a perfect matching, and let~$w$ be a weight function such that
all nice cycles in~$G$ have nonzero circulation.
Then the minimum perfect matching is unique.
That is,~$w$ is isolating.
\end{lemma}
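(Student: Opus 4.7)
The plan is the standard symmetric-difference argument. Suppose for contradiction that $w$ is not isolating, so there exist two distinct minimum-weight perfect matchings $M_1$ and $M_2$ with $w(M_1) = w(M_2)$. I would then look at the symmetric difference $M_1 \symdiff M_2$. Since every vertex has degree $0$, $1$, or $2$ in $M_1 \cup M_2$ and each vertex incident to an edge of $M_1 \symdiff M_2$ has exactly one such edge from $M_1$ and one from $M_2$, the set $M_1 \symdiff M_2$ decomposes into vertex-disjoint even cycles $C_1, \dots, C_k$ whose edges alternate between $M_1$ and $M_2$.

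Next I would verify that each $C_i$ is a nice cycle in the sense of the preceding definition: the edges of $M_1 \setminus C_i$ form a perfect matching of $G - V(C_i)$, since $M_1$ is a perfect matching of $G$ and exactly the vertices on the cycle $C_i$ are rematched. Hence by hypothesis each $C_i$ has nonzero circulation, which by the definition of circulation equals $\lvert w(M_1 \cap C_i) - w(M_2 \cap C_i) \rvert$, using the alternation of $M_1$- and $M_2$-edges along $C_i$. So for every $i$ we have $w(M_1 \cap C_i) \neq w(M_2 \cap C_i)$.

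Now I would use the equal-weight assumption to force a sign change. Summing over the cycles,
\[
\sum_{i=1}^{k} \bigl( w(M_2 \cap C_i) - w(M_1 \cap C_i) \bigr) \;=\; w(M_2) - w(M_1) \;=\; 0,
\]
while each summand is nonzero. Therefore some index $j$ satisfies $w(M_2 \cap C_j) - w(M_1 \cap C_j) < 0$. The swap $M' = (M_1 \setminus C_j) \cup (M_2 \cap C_j)$ is again a perfect matching of $G$ (it matches $V(C_j)$ via the $M_2$-edges on $C_j$, and matches the rest as $M_1$ does), and $w(M') = w(M_1) + w(M_2 \cap C_j) - w(M_1 \cap C_j) < w(M_1)$, contradicting the minimality of $M_1$.

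The argument is almost purely bookkeeping; the only delicate point is checking that each component of $M_1 \symdiff M_2$ really is a nice cycle in the stated sense and that the circulation precisely captures the weight change under the local swap. Bipartiteness is not used; the lemma holds for arbitrary graphs with a perfect matching.
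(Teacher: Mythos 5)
Your proof is correct and takes essentially the same route as the paper: decompose $M_1 \symdiff M_2$ into alternating (hence nice) cycles, invoke nonzero circulation, and swap along a cycle to obtain a lighter perfect matching. The only difference is cosmetic: the paper picks an arbitrary cycle $C$ of $M_1 \symdiff M_2$ and observes that since $c_w(C) = \abs{w(M_1 \cap C) - w(M_2 \cap C)} \neq 0$, swapping on $C$ strictly decreases whichever of $M_1, M_2$ carries the heavier side of $C$ — so the summation over all cycles and the appeal to $w(M_1) = w(M_2)$ are unnecessary. Your remark that bipartiteness plays no role here is also right.
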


\begin{proof}
Assume that there two perfect matchings~$M_1, M_2$ of minimum weight in~$G$.
Their symmetric difference $M_1 \symdiff M_2$ consists of nice cycles.
Let~$C$ be a nice cycle in $M_1 \symdiff M_2$.
By the assumption of the lemma,
we have $\wc_w(C) \not= 0$.
Hence, one can decrease the weight of either~$M_1$ or~$M_2$ by altering it on~$C$.
As~$M_1$ and~$M_2$ are minimal, we get a contradiction.
\end{proof}

We will construct an isolating weight function for bipartite graphs.
However,
our weight function will not necessarily have nonzero circulation on all nice cycles.
We start out with a weight assignment which ensures nonzero
circulations for a small set of cycles in a black-box way, 
i.e., without being able to compute the set efficiently.
The following lemma describes a standard trick for this.

\begin{lemma}[\cite{CRS95}]
\label{lem:smallCycles}
Let $G$ be a graph with~$n$ nodes. 
Then, for any number~$s$, one can construct a set of~$O(n^2 s)$ weight assignments 
with weights bounded by~$O(n^2 s)$,
such that for any set of~$s$ cycles, one of the weight assignments gives nonzero circulation
to each of the~$s$ cycles. 
\end{lemma}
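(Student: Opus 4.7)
The plan is to apply a standard modular-reduction (or ``prime sieve'') trick: produce a family of weight assignments with small weights by reducing a single high-weight assignment modulo many small primes, and argue by a counting argument that for any fixed set of $s$ cycles most primes work.

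First I would fix an arbitrary ordering $e_1, e_2, \ldots, e_m$ of the edges, with $m \leq \binom{n}{2}$, and consider the base assignment $w^*(e_i) = 2^{i-1}$. For any even cycle $C$, the circulation $c_{w^*}(C)$ is the absolute value of an alternating signed sum of distinct powers of~$2$, hence a nonzero positive integer bounded above by $2^m$. For each prime $p$, I would define the reduced assignment $w_p(e_i) = 2^{i-1} \bmod p$, so its weights lie in $\{0,1,\ldots,p-1\}$. Since circulation is linear in the weights, $c_{w_p}(C) \equiv c_{w^*}(C) \pmod p$, and therefore $c_{w_p}(C) = 0$ forces $p \mid c_{w^*}(C)$.

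The key counting step is as follows: given any $s$ even cycles $C_1, \ldots, C_s$, the product $\prod_{j=1}^{s} c_{w^*}(C_j)$ is a nonzero integer bounded by $2^{sm}$, and so has at most $sm = O(n^2 s)$ distinct prime divisors. Thus at most $O(n^2 s)$ primes are ``bad'' for this fixed family of cycles. If I take all primes up to a threshold $N$ with $\pi(N) > sm$, there must exist a prime $p \leq N$ that is good, and the corresponding $w_p$ gives nonzero circulation to every $C_j$ simultaneously. By Chebyshev's bound $\pi(N) = \Omega(N/\log N)$, choosing $N = \Theta(n^2 s \log(n^2 s))$ is more than enough; the family $\{w_p : p \le N,\ p \text{ prime}\}$ then has $O(n^2 s)$ members, and each weight is bounded by $N$.

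The argument is essentially bookkeeping once the base assignment $w^*$ is in place; the only mildly delicate point I would have to watch is that the alternating sum of distinct powers of~$2$ really is nonzero (the largest term cannot be cancelled by the sum of all smaller terms), and that the counting $\log_2 (\prod_j c_{w^*}(C_j)) \leq sm$ is enough to bound the number of bad primes. Apart from absorbing the logarithmic slack into the $O(\cdot)$ notation so as to match the claimed $O(n^2 s)$ bound on the weights and the size of the family, no nontrivial obstacle remains.
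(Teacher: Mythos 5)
Your proposal uses the same core idea as the paper---assign exponentially growing weights $w^*(e_i)=2^{i-1}$ and reduce modulo small numbers---but you restrict to prime moduli, while the paper reduces modulo \emph{every} integer $j$ with $2\leq j\leq t$. That choice matters quantitatively, and the slack you try to wave away at the end is a genuine gap: with the crude bound that a nonzero integer $\leq 2^{sm}$ has at most $sm$ distinct prime divisors, you must take primes up to $N$ with $\pi(N)>sm$, which by Chebyshev forces $N=\Theta(sm\log(sm))$. Your weights are then bounded by $N=O(n^2 s\log(n^2 s))$, which is \emph{not} $O(n^2 s)$; a multiplicative $\log$ factor cannot be ``absorbed into the $O(\cdot)$ notation'' of a polynomial bound. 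The paper sidesteps this by taking all integer moduli $2,\dots,t$ and invoking the explicit estimate $\lcm(2,3,\dots,t)>2^{t}$ for $t\geq 7$ (Nair), so that $t=n^2 s$ already guarantees $\lcm(2,\dots,t)>\prod_i c_{w^*}(C_i)$ and hence some $j\leq t$ does not divide the product---giving both family size and weight magnitude $O(n^2 s)$ with no log loss.

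Your approach \emph{could} be repaired while keeping prime moduli: a nonzero integer $\leq 2^{sm}$ actually has only $O(sm/\log(sm))$ distinct prime divisors (because the product of the first $k$ primes grows like $e^{(1+o(1))k\ln k}$), and with that sharper count $\pi(N)>O(sm/\log(sm))$ is already satisfied for $N=O(sm)=O(n^2 s)$. But as written, with the crude $sm$ bound and the claim that the log slack is harmless, the proof does not establish the stated $O(n^2 s)$ weight bound. (For the downstream use in Section~\ref{sec:weight} the extra $\log$ would in fact be absorbed into the final $O(\log^2 n)$-bit bound, so the quasi-$\NC$ result would survive, but the lemma as stated would be misproved.)
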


\begin{proof}
Let us first assign exponentially large weights.
Let $e_1, e_2, \dots, e_m$ be some enumeration of the edges of~$G$.
Define a weight function~$w$ by $w(e_i) = 2^{i-1}$, for $i = 1,2, \dots, m$.
Then clearly every cycle has a nonzero circulation. 
However, we want to achieve this with small weights.

We consider the weight assignment modulo small numbers,
i.e., the weight functions $\set{w \bmod j}{2 \leq j \leq t}$
for some appropriately chosen~$t$.
We want to show that for any fixed set of~$s$ cycles $\{C_1,C_2, \dots, C_s\}$, 
one of these assignments will work, when~$t$ is chosen large enough. 
That is, we want
$$\exists j \leq t ~~~\forall i\le s:~ c_{w \bmod j}(C_i) \neq 0. 
$$
This will be true provided
$$\exists j \leq t:~ \prod_{i=1}^s c_w(C_i) \not\equiv 0 \pmod{j}  .$$
In other words,
$$\lcm (2,3,\dots,t) \nmid \prod_{i=1}^s c_w(C_i) .$$
This can be achieved by setting $\lcm (2,3,\dots,t) > \prod_{i=1}^s c_w(C_i) $.
The product $\prod_{i=1}^s c_w(C_i)$ is upper bounded by $2^{n^2 s}$.
Furthermore,
we have
 $\lcm (2,3,\dots,t) > 2^{t}$ for $t\geq 7$ (see~\cite{Nai82}).
Thus, choosing $t =  n^2 s$ suffices. 
Clearly, the weights are bounded by $t = n^2 s$.
\end{proof}


\section{Isolation in Bipartite Graphs}
\label{sec:isolation}

In this section we present our main result, 
an almost efficient parallel algorithm for the perfect matching problem.

\begin{theorem}\label{thm:pm-quasiNC}
For bipartite graphs, $\decisionPM$ and $\searchPM$ are in~$\QuasiNC^2$.
\end{theorem}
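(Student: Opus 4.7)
The strategy is to construct, deterministically in quasi-$\NC^2$, an isolating weight assignment for~$G$ with integer weights bounded by $n^{O(\log n)}$. Given such weights, the determinant-based algorithm of Section~\ref{sec:mvv} solves $\searchPM$ (and hence $\decisionPM$) in quasi-$\NC^2$: the matching matrix has entries of bit-length $n^{O(\log n)}$, and $n \times n$ integer determinants of such entries fit within depth $O(\log^2 n)$ and quasi-polynomial size.

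I build the weight assignment iteratively over $\lceil \log n \rceil$ rounds. At round~$i$ I maintain a family of candidate weights, each $w_i$ satisfying the invariant that the subgraph
\[
G_i \;=\; \bigcup \set{M}{M \text{ is a min-weight perfect matching under } w_i}
\]
contains no cycle of length less than~$2^i$. The trivial assignment $w_0 \equiv 0$ gives $G_0 = G$ at round~$0$, and the invariant is vacuous.

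To advance from round~$i$ to round~$i{+}1$ I combine two ingredients. First, a structural combinatorial claim, which I expect to be the main technical obstacle: a matching-covered bipartite graph without cycles of length $<r$ has at most $n^{O(1)}$ cycles of length $<2r$. Second, a bipartite-only fact coming from Lemma~\ref{lem:polytope}: every cycle~$C$ in~$G_i$ has $c_{w_i}(C) = 0$, because the alternating $\pm 1$ indicator of~$C$ is a feasible direction out of any interior point of the min-weight face of $\PM(G)$, so the weight must be constant along it. Given the combinatorial claim with $r = 2^i$, Lemma~\ref{lem:smallCycles} supplies $n^{O(1)}$ weight functions~$w'$, at least one of which assigns nonzero circulation to every cycle of length $<2^{i+1}$ in~$G_i$. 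For each pair $(w_i, w')$ I form
\[
w_{i+1} \;=\; N \cdot w_i + w',
\]
with $N$ larger than any possible $w'$-weight of a perfect matching. This forces $G_{i+1} \subseteq G_i$, since a min-weight PM under $w_{i+1}$ must be min-weight under~$w_i$. For any cycle $C \subseteq G_{i+1}$ we have $c_{w_i}(C) = 0$, so $c_{w_{i+1}}(C) = c_{w'}(C)$; but applying the polytope fact inside $G_{i+1}$ gives $c_{w_{i+1}}(C) = 0$. For the correct choice of~$w'$ this is impossible when $|C| < 2^{i+1}$, so the invariant propagates.

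After $\log n$ rounds each surviving candidate has $G_{\log n}$ acyclic and matching-covered, hence a disjoint union of edges realizing a unique perfect matching; that is, $w_{\log n}$ is isolating. The branching across rounds yields $(n^{O(1)})^{\log n} = n^{O(\log n)}$ final weight assignments, and the weight magnitudes also grow only to $n^{O(\log n)}$. Evaluating all of them in parallel with the MVV test detects an isolating one and reconstructs the matching, entirely within quasi-$\NC^2$.
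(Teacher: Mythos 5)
Your proposal is essentially the paper's own proof, recast slightly. All three ingredients are identical: the bipartite polytope argument (Lemma~\ref{lem:cyclesZero} / Corollary~\ref{cor:minwtmatchings}) forcing every cycle in the union of min-weight matchings to have zero circulation; the girth-doubling cycle count (Lemma~\ref{lem:2gCycles}); and the black-box cycle-killing weights of Lemma~\ref{lem:smallCycles}. Your incremental update $w_{i+1} = N\,w_i + w'$ is the same thing as the paper's precedence-scaled combination $w = \sum_i w_i B^{k-1-i}$, and the final determinant computation and branching over $n^{O(\log n)}$ candidate weights matches the paper.

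One indexing detail deserves care, because as written it does not quite line up with Lemma~\ref{lem:2gCycles}. You state the combinatorial claim as: no cycles of length $<r$ implies $n^{O(1)}$ cycles of length $<2r$. The lemma the paper actually proves needs a weak inequality in the hypothesis — no cycles of length $\leq r$ — and then bounds the number of cycles of length $\leq 2r$ only for even $r$; for odd $r$ the conclusion drops to $\leq 2r-2$. With your strict hypothesis and $r = 2^i$, the graph may still contain a cycle of length exactly $2^i$, so the largest parameter you can safely feed into Lemma~\ref{lem:2gCycles} (in a bipartite graph) is $2^i-2$, which only bounds cycles of length $\leq 2^{i+1}-4$ and leaves length $2^{i+1}-2$ uncounted. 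The same slack appears at the end, where ``no cycle of length $< 2^{\log n}$'' does not by itself rule out a Hamiltonian cycle of length $n$. The paper sidesteps all of this by phrasing the invariant with weak inequalities — $G_i$ has no cycle of length $\leq 2^{i+1}$ — and having $w_i$ kill all cycles of length $\leq 2^{i+2}$ in round $i$; this fits Lemma~\ref{lem:2gCycles} exactly and gives $k=\lceil \log n\rceil - 1$ rounds with $2^{k+1}\geq n$. Shift your thresholds to the $\leq$ side accordingly and the rest of your argument closes without further change.
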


Let~$G(V,E)$ be the given bipartite graph.
In the following discussion,
we will assume that~$G$ has perfect matchings.
Our major challenge is to isolate one of the perfect matchings in~$G$ by an appropriate weight function.
As we will see later, if~$G$ does not have any perfect matchings,
then our algorithm will detect this.

Our starting point is Lemma~\ref{lem:circulation} which requires nonzero circulations for all nice cycles.
Recall that the construction algorithm requires the weights to be polynomially bounded.
As the number of nice cycles can be exponential in the number of nodes, 
even the existence of such a weight assignment is not immediately clear.
Nonetheless, Datta et al.~\cite{DKR10} give a construction of such a weight assignment for bipartite planar graphs. 
For general bipartite graphs, this is still an open question.

Our approach is to work with a weight function which gives nonzero circulation to only small cycles. 
Lemma~\ref{lem:smallCycles} describes a way to find such weights.
The cost of this weight assignment is proportional to the number of small cycles. 
Further, it is a black-box construction in the sense that one does not need to know the set of cycles.
It just gives a set of weight assignments such that at least one of them has the desired property. 


\subsection{The union of Minimum Weight Perfect Matchings}
Let us assign a weight function for bipartite graph~$G$ which gives nonzero circulation to all small cycles.
Consider a new graph~$G_1$ obtained by the union of minimum weight perfect matchings in~$G$.
Our hope is that~$G_1$ is significantly smaller than the original graph~$G$.
Note that it is not clear if one can efficiently construct~$G_1$ from~$G$.
This is because the determinant of the bi-adjacency matrix with weights 
in equation~(\ref{eq:det}) from Section~\ref{sec:mvv} can still be zero. 
As we will see,
we do not need to construct~$G_1$; it is just used in the argument. 
Our final weight assignment will be completely black-box in this sense.

Our next lemma is the main reason why our technique is restricted to bipartite graphs.
It implies that the graph~$G_1$ constructed from the minimum weight perfect matchings in~$G$
contains no other perfect matchings than these.
In Figure~\ref{fig:nonbipartite}, we give an example showing that this does not hold in general graphs.
The fact that $G_1$ has only minimum weight perfect matchings is equivalent to saying that 
every nice cycle in~$G_1$ has zero circulation. 
The following lemma actually proves an even stronger statement: \emph{every} cycle in~$G_1$ has zero circulation.

\comment{
\begin{lemma}
\label{lem:minwtmatchings}
Let $G(V,E)$ be a bipartite graph with weight function~$w$.
Let~$E_1$ be the union of all minimum weight perfect matchings in~$G$.
Then every perfect matching in the graph $G_1 (V,E_1)$ has the same weight -- the minimum weight of
any perfect matching in~$G$.
\end{lemma}
 
\begin{proof}
We use the description of the perfect matching polytope for bipartite graphs
from Lemma~\ref{lem:polytope}.
Let the minimum weight of perfect matchings in~$G$ be~$q$.
Recall that $$\min\set{w(\x)}{\x \in \PM(G)} = q.$$
The intersection of~$\PM(G)$ with the hyperplane $H = \set{\x}{w(\x)=q}$
is a face~$F$ of the polytope.

One can describe a face of a polytope by replacing some of the inequalities in the description
of the polytope by equalities. 
For the perfect matching polytope, the inequalities are given by (\ref{eq:PMpositive}).
Thus, for the face~$F$ there exists a set $S \subseteq E$ such that
for any~$\x =(x_e)_e$, we have $\x \in F$ if and only if
\begin{eqnarray}
\sum_{e \in \delta(v)} x_e &=& 1 \hspace{\eq} v \in V, \label{eq:Fsum1}\\
x_e &\geq& 0 \hspace{\eq} e \in E \setminus S, \label{eq:Fpositive} \\
x_e &=& 0 \hspace{\eq} e \in S. \label{eq:Fzero}
\end{eqnarray}
Clearly, for any minimum weight perfect matching~$M$,
its matching point~$\xM$ satisfies the above three conditions as it lies on the face~$F$. 
In particular, equation~(\ref{eq:Fzero}) implies that $S \cap M = \emptyset$.
It follows that
\begin{equation}\label{eq:polytop-face}
E_1 \cap S = \emptyset.
\end{equation}

Now, consider any perfect matching~$M'$ in the graph $G_1(V,E_1)$.
By equation~(\ref{eq:polytop-face}), we have $M'\cap S = \emptyset$.
Hence its matching point~$\xMp$ satisfies equation~(\ref{eq:Fzero}).
Since~$\xMp$ represents a perfect matching, 
it also satisfies~(\ref{eq:Fsum1}) and~(\ref{eq:Fpositive}).
Thus, we have $\xMp \in F$. Hence, $w(M') = w(\xMp) = q$.
\end{proof}
}

\begin{figure}[htbp]
\begin{center}
\input{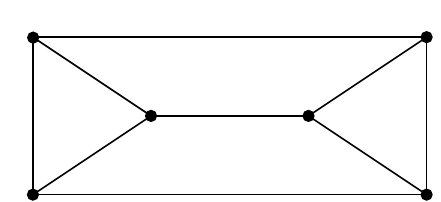_t}
\caption{A non-bipartite weighted graph where every edge is contained in a minimum perfect matching
of weight~$1$.
However, the graph also has a perfect matching of weight~$3$.
That is,
Corollary~\ref{cor:minwtmatchings} does not hold for non-bipartite graphs.
}
\label{fig:nonbipartite}
\end{center}
\end{figure}

\begin{lemma}
\label{lem:cyclesZero}
Let $G(V,E)$ be a bipartite graph with weight function~$w$.
Let~$C$  be a cycle in~$G$ such that $c_w(C) \not= 0$.
Let~$E_1$ be the union of all minimum weight perfect matchings in~$G$.
Then graph $G_1 (V,E_1)$ does not contain cycle~$C$.
\end{lemma}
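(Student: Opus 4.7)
The plan is to work in the perfect matching polytope $\PM(G)$. Let $q = \min\{w(\x) : \x \in \PM(G)\}$ denote the minimum weight of a perfect matching in~$G$, and let $F = \PM(G) \cap \{\x : w(\x) = q\}$ be the optimal face; every minimum weight perfect matching point $\xM$ lies in $F$. Suppose, for contradiction, that $C$ is a cycle in $G_1$ with $c_w(C) \neq 0$. The goal is to exhibit a point of $\PM(G)$ of weight strictly below $q$.

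First I would produce a single point $\x^* \in F$ that is strictly positive on every edge of $C$. For each edge $e$ of $C$, the definition of $E_1$ gives a minimum weight perfect matching $M_e$ containing $e$; then $\x^* = \frac{1}{|C|} \sum_{e \in C} \x^{M_e}$ is a convex combination of points of $F$, hence lies in $F$, and satisfies $x^*_e > 0$ for every $e \in C$.

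Next I would perturb $\x^*$ along $C$. Since $G$ is bipartite, $C$ has even length, so its edges split into two alternating classes $C^+$ and $C^-$ (no two consecutive edges on $C$ belong to the same class). Define $\chi \in \R^m$ to be $+1$ on $C^+$, $-1$ on $C^-$, and $0$ elsewhere, and set $\x^\pm = \x^* \pm \epsilon \chi$ for a sufficiently small $\epsilon > 0$. At every vertex of $C$ exactly one edge of $C^+$ and one edge of $C^-$ are incident, so the degree equations~(\ref{eq:PMsum}) are preserved; and the non-negativity constraints~(\ref{eq:PMpositive}) hold for small $\epsilon$ because $\x^*$ is strictly positive on the support of $\chi$. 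Lemma~\ref{lem:polytope} then gives $\x^+, \x^- \in \PM(G)$.

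Finally, $w(\x^\pm) - w(\x^*) = \pm \epsilon \bigl( w(C^+) - w(C^-) \bigr)$, whose absolute value equals $\epsilon \cdot c_w(C) > 0$. Hence one of $\x^+, \x^-$ is a point of $\PM(G)$ of weight strictly less than $q$, contradicting the definition of $q$. The only delicate step is producing $\x^*$ with support containing all of $C$; the rest is a routine polytope perturbation. Bipartiteness enters crucially through Lemma~\ref{lem:polytope}, whose clean description of $\PM(G)$ fails for general graphs (as the example in Figure~\ref{fig:nonbipartite} illustrates), which is exactly where this line of argument is restricted to the bipartite case.
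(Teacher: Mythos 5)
Your proposal is correct and follows essentially the same route as the paper: produce a point in the optimal face of $\PM(G)$ with strictly positive support on $C$, perturb it along the alternating $\pm\varepsilon$ vector of $C$, check via Lemma~\ref{lem:polytope} that the perturbed point stays in $\PM(G)$, and observe that the weight changes by $\pm\varepsilon\, c_w(C) \neq 0$, contradicting optimality. The only cosmetic difference is that you average one minimum weight matching per edge of $C$, while the paper averages all minimum weight perfect matchings; both yield a point of the optimal face with the needed positivity.
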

\begin{proof}
Let the weight of the minimum weight perfect matchings in~$G$ be~$q$.
Let $\x_1,\x_2, \dots, \x_t$ be all the minimum weight perfect matching points of~$G$, i.e., 
the corners of~$\PM(G)$ corresponding to the weight~$q$.
Consider the average point $\x \in \PM(G)$ of these matching points,
$$\x = \frac{\x_1+ \x_2 + \cdots + \x_t}{t}.$$
Clearly, $w(\x) = q$.
Since each edge in~$E_1$ participates in a minimum weight perfect matching, 
for~$\x = (x_e)_e$, we have that $x_e \neq 0$ for all $e \in E_1$.
Now, consider a cycle~$C$ with $c_w(C) \not =0$.
Let the edges of cycle~$C$ be $(e_1,e_2, \dots, e_p)$ in cyclic order. 
For the sake of contradiction let us assume that all the edges of~$C$ lie in~$E_1$.
We show that when we move from point~$\x$ along the cycle~$C$, 
we reach a point in the perfect matching polytope with a weight smaller than~$q$. 
This technique of moving along the cycle has been used by Mahajan and Varadarajan~\cite{MV00}.
To elaborate, consider a new point $\y = (y_e)_e$ such that for all $e \in E$,
$$y_e = \begin{cases}
x_e + (-1)^i\, \varepsilon, & \text{ if } e  = e_i, \text{ for some } 1 \leq i \leq p,  \\
x_e, & \text{ otherwise},
\end{cases}
$$
for some $\varepsilon \neq 0$.
Clearly, the vector $\x -\y$ has nonzero coordinates only on cycle~$C$, 
where its entries are alternating~$\varepsilon$ and~$-\varepsilon$.
Hence, 
\begin{equation}
{w(\x-\y)} = \pm \varepsilon \cdot c_w(C).
\label{eq:XYC}
\end{equation}
As $c_w(C) \neq 0$, we get $w(\x - \y) = w(\x) - w(\y) \neq 0$.
We choose~$\varepsilon  \neq 0$ such that
\begin{itemize}
\item
its sign is such that $w(\y) < w(\x) =q$, and
\item
it is small enough so that $y_e \geq 0$ for all $e \in E$.
This is possible because $x_{e_i} > 0$ for each $1 \leq i \leq p$.
\end{itemize}

We argue that~$\y$ fulfills the conditions of Lemma~\ref{lem:polytope}
and therefore also lies in the perfect matching polytope.
Because $y_e \geq 0$ for all $e \in E$,
it satisfies inequality~(\ref{eq:PMpositive}) from Lemma~\ref{lem:polytope}.
It remains to show that~$\y$ also satisfies 
\begin{equation}\label{eq:y}
\sum_{e \in \delta(v)} y_e = 1 \hspace{\eq} v \in V.
\end{equation}
To see this, let $v \in V$.
We consider two cases:
\begin{enumerate}
\item
$v \not\in C$. 
Then $y_e=x_e$ for each edge $e \in \delta(v)$. 
Thus, we get~(\ref{eq:y}) from equation~(\ref{eq:PMsum}) for~$\x$. 
\item
$v \in C$.
Let~$e_j$ and~$e_{j+1}$ be the two edges from~$C$ which are incident on~$v$.
By definition, 
$y_{e_j} = x_{e_j} + (-1)^j\, \varepsilon$ and 
$y_{e_{j+1}} = x_{e_{j+1}} + (-1)^{j+1}\, \varepsilon$.
For any other edge~$e \in \delta(v)$, we have $y_e = x_e$. 
Combining this with equation~(\ref{eq:PMsum}) for~$\x$, 
we get that~$\y$ satisfies~(\ref{eq:y}) for~$v$.
\end{enumerate}
We conclude that~$\y$ lies in the polytope~$\PM(G)$.
Since $w(\y) < q$,
there must be a corner point of the polytope, which corresponds to a perfect matching in~$G$
with weight~$<q$.
This gives a contradiction.
\end{proof}

After the first version of this paper,
Rao, Shpilka, and Wigderson (see~\cite[Lemma 2.4]{GG15})
came up with an alternate proof of Lemma~\ref{lem:cyclesZero}, 
which is based on Hall's theorem instead of the matching polytope.

A consequence of Lemma~\ref{lem:cyclesZero} is that~$G_1$ has no other perfect matchings
than the ones used to define~$G_1$:
let $M_0,M_1$ be two perfect matchings in~$G_1$.
Their symmetric difference forms a set of cycles. 
By Lemma~\ref{lem:cyclesZero}, 
the circulations of these cycles are all zero.
Hence,
$M_0$ and~$M_1$ have the same weight.

\begin{corollary}
\label{cor:minwtmatchings}
Let $G(V,E)$ be a bipartite graph with weight function~$w$.
Let~$E_1$ be the union of all minimum weight perfect matchings in~$G$.
Then every perfect matching in the graph $G_1 (V,E_1)$ has the same weight -- the minimum weight of
any perfect matching in~$G$.
\end{corollary}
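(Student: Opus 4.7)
The plan is to deduce the corollary directly from Lemma~\ref{lem:cyclesZero} by comparing two arbitrary perfect matchings in $G_1$ via their symmetric difference. First I would fix any two perfect matchings $M_0$ and $M_1$ of $G_1$. Since both $M_0$ and $M_1$ cover every vertex exactly once, their symmetric difference $M_0 \symdiff M_1$ consists of vertex-disjoint even cycles $C_1, C_2, \dots, C_k$, each of whose edges alternates between $M_0$ and $M_1$. Because $M_0, M_1 \subseteq E_1$, each of these cycles lies entirely inside $G_1$.

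Next I would invoke the contrapositive of Lemma~\ref{lem:cyclesZero}: any cycle $C$ in $G_1$ must satisfy $c_w(C) = 0$, since otherwise Lemma~\ref{lem:cyclesZero} would forbid $C$ from appearing in $G_1$. Applying this to each $C_i$ in the symmetric difference gives $c_w(C_i) = 0$ for every $i$. Now, writing the weight difference
\[
w(M_1) - w(M_0) = \sum_{i=1}^{k} \bigl( w(M_1 \cap C_i) - w(M_0 \cap C_i) \bigr),
\]
each summand is (up to sign) exactly the alternating sum of edge weights around $C_i$, i.e., $\pm c_w(C_i) = 0$. Hence $w(M_0) = w(M_1)$.

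Since $M_0$ was chosen as any minimum weight perfect matching of $G$ (which exists in $G_1$ by the definition of $E_1$) and $M_1$ was an arbitrary perfect matching of $G_1$, the common weight must equal the minimum weight $q$ of a perfect matching of $G$, proving the corollary. The only subtle step is verifying that the weight difference over the symmetric difference collapses edge by edge to the alternating sums $c_w(C_i)$, but this is routine once one fixes an orientation along each $C_i$ and notes that $M_0$ and $M_1$ occupy alternating positions. There is no essential obstacle beyond invoking Lemma~\ref{lem:cyclesZero}; the corollary is really just the observation that zero circulation on every cycle of $G_1$ forces all perfect matchings of $G_1$ to be weight-equivalent.
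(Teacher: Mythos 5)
Your proposal is correct and follows essentially the same argument the paper gives: take the symmetric difference of two perfect matchings in $G_1$, observe that it decomposes into cycles lying entirely in $G_1$, invoke (the contrapositive of) Lemma~\ref{lem:cyclesZero} to conclude each such cycle has zero circulation, and hence the two matchings have equal weight. Your write-up is merely a more spelled-out version of the paper's one-paragraph derivation preceding the corollary.
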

 
Recall that by our weight function, each small cycle in~$G$ has a nonzero circulation.
Therefore by Lemma~\ref{lem:cyclesZero}, $G_1$~has no small cycles. 


Now, we want to repeat this procedure with graph~$G_1$
with a new weight function. 
However, $G_1$ does not have small cycles. 
Hence, we look at slightly larger cycles. 
We argue that their number remains polynomially bounded.

Teo and Koh~\cite{TK92} showed that the number of shortest cycles in a graph with~$m$ edges 
is bounded by~$m^2$.
In the following lemma, 
we extend their argument and give a bound on the number of cycles 
that have length  at most twice the length of shortest cycles.

\begin{lemma}
\label{lem:2gCycles}
Let $H$ be a graph with~$n$ nodes that has no cycles of length~$\leq r$.
Let $r' = 2r$ when~$r$ is even, and $r' = 2r -2$ otherwise.
Then~$H$ has~$\leq n^4$ cycles of length~$\leq r'$.
\end{lemma}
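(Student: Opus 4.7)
The plan is to show that every cycle of length at most $r'$ in $H$ is determined by an ordered 4-tuple of its vertices via unique shortest paths, so the total number of such cycles is bounded by $n^4$.

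First I would dispatch the trivial case $\ell \leq 3$, where the number of cycles is at most $\binom{n}{3} \leq n^4$. Now fix a cycle $C$ in $H$ of length $\ell$ with $4 \leq \ell \leq r'$. I would choose four vertices $v_1, v_2, v_3, v_4$ on $C$ in cyclic order that partition $C$ into four arcs $P_1, P_2, P_3, P_4$, each of length at most $L := \lceil \ell/4 \rceil$; this is always possible by spacing the $v_i$ as evenly as possible around $C$.

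The main step is to verify that each $P_i$ is the \emph{unique} shortest path in $H$ between its endpoints. The arithmetic I need is $2L \leq r$: for $r$ even, $L \leq \lceil 2r/4 \rceil = r/2$, and for $r$ odd, $L \leq \lceil (2r-2)/4 \rceil = (r-1)/2$. Suppose some $v_i$-$v_{i+1}$ path $P' \neq P_i$ had $|P'| \leq L$. Then $P_i \cup P'$ would contain a cycle of length at most $|P_i|+|P'| \leq 2L \leq r$, contradicting the hypothesis that $H$ has no cycles of length $\leq r$. The same reasoning also rules out any strictly shorter $v_i$-$v_{i+1}$ path, so $P_i$ is simultaneously a shortest path and the unique one.

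Once uniqueness is in hand, every ordered 4-tuple in $V^4$ reconstructs to at most one cycle by gluing the four unique shortest paths, while every cycle of length at most $r'$ produces at least one such 4-tuple by the construction above. Hence the number of cycles of length $\leq r'$ is at most $|V|^4 = n^4$. The one genuinely delicate point is the choice $r' = 2r-2$ for odd $r$: it is precisely what keeps $\lceil \ell/4 \rceil \leq (r-1)/2$ and therefore $2L$ strictly below the girth $r+1$, while for even $r$ the bound $r' = 2r$ leaves more slack.
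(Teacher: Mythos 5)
Your proposal is correct and matches the paper's argument: both map each cycle of length $\leq r'$ to an (ordered) $4$-tuple of roughly equispaced vertices, and both show that if two such cycles shared a $4$-tuple, some pair of corresponding arcs of length $\leq r/2$ would differ and force a cycle of length $\leq r$. The only cosmetic difference is that you phrase the injectivity as ``each arc is the unique shortest path between its endpoints,'' whereas the paper compares the arcs of two hypothetical cycles directly; the underlying short-cycle contradiction is identical.
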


\begin{proof}
Let~$C = (v_0, v_1, \dots, v_{\ell -1})$ be a cycle of length~$\ell \leq r'$ in~$G$.
Let $f = \ell / 4 $.
We successively choose four nodes on~$C$ with distance~$\leq \lceil f \rceil \leq r/2$
and \emph{associate} them with~$C$.
We start with $u_0 = v_0$ and define 
$u_i = v_{\lceil if \rceil}$, for $ i = 1,2,3$.
Note that the distance between~$u_3$ and~$u_0$ is also~$\leq \lceil f \rceil$.
Since we could choose any node of~$C$ as starting point~$u_0$,
the four nodes $(u_0,u_1,u_2,u_3)$ associated with~$C$ are not uniquely defined.
However,
they uniquely describe~$C$.
\begin{claim}\label{cl:unique-cycle}
Cycle~$C$ 
is the only cycle in~$H$ of length~$\leq r'$ that is associated with $(u_0, u_1,u_2,u_3)$.
\end{claim}

\begin{proof}
Suppose~$C' \not= C$ would be another such cycle.
Let~$p \not=p'$ be paths of~$C$ and~$C'$, respectively,
that connect the same $u$-nodes.
Note that~$p$ and~$p'$ create a cycle  in~$H$ of length at most
\[
 |p| + |p'| ~\leq~  {r \over 2} + {r \over 2}  ~\leq~ r, 
 \]
which is a contradiction.
This proves the claim.
\end{proof}

There are~$\leq n^4$ ways to choose~4 nodes and their order.
By Claim~\ref{cl:unique-cycle}, this gives a bound on the number of cycles of length~$\leq r'$.
\end{proof}

Lemma~\ref{lem:2gCycles} suggests the following strategy how to continue from~$G_1$:
in each successive round, we double the length of the cycles
and adapt the weight function to give nonzero circulations to these slightly longer cycles.
By Lemma~\ref{lem:cyclesZero}, 
we have that any cycle with nonzero circulation disappears from the new graph
obtained by taking only the minimum perfect matchings from the previous graph.
Thus, in~$\log n$ rounds we reach a graph with no cycles, i.e., with a unique perfect matching.
Now, we put all the ingredients together and formally define our weight assignment.

\comment{
We use the following result of Alon, Hoory, \& Linial~\cite{AHL02} which states
that the nodes of graphs with no small cycles have a small average degree.

\begin{theorem}[\cite{AHL02}]
\label{thm:girthdegree}
Let $H$ be a graph with~$n$ nodes, average degree $d \geq 2$ and girth~$g$.
Then,
$$n \geq 2 (d-1)^{g/2-1}.$$
\end{theorem}

The proof is fairly easy for graphs where each node has degree~$\geq d$:
do a breadth-first search of the graph starting from an arbitrary node
until depth~$g/2-1$. 
When one reaches a node~$v$ via an edge~$e$, 
there are~$\geq d-1$ edges incident on~$v$ other than~$e$. 
So, the search looks like a $(d-1)$-ary tree of depth~$g/2-1$. 
As there are no cycles of length~$< g$, all the nodes in the tree should be distinct,
which are $\geq 2(d-1)^{g/2-1}$. 
This gives us the desired bound. 
Alon et al.\ generalized the argument to \emph{average degree}~$\geq d$.

If we put $g = 4 \log n -2$ in Theorem~\ref{thm:girthdegree} and take logarithms on both sides,
we get
\begin{eqnarray*}
\log n & \geq & 2 (\log n -1) \log (d-1) + 1 \\
1 & \geq & 2 \log (d-1)\\
 2 &\geq& (d-1)^2 \\
 1 + \sqrt{2} &\geq& d.
\end{eqnarray*}
Hence, the length of small cycles in our weighting scheme is chosen to be $4 \log n - 2$.
Then~$G_1$ has average degree~$<2.5$.

\begin{corollary}
\label{cor:degree2}
Let $H$ be a graph with girth $g \geq 4 \log n -2$. 
Then~$H$ has average degree~$< 2.5$.
\end{corollary}

It follows that at least a constant fraction of the nodes in~$G_1$ have degree~$ \leq 2$.
Hence, one can say that the graph~$G_1$ has significantly fewer edges than~$G$. 
Now, the plan is to repeat this procedure for~$G_1$ with a new weight function. 
However, $G_1$ has no small cycles. 
The standard cycle length counts the number of nodes in a cycle.
The idea now is to  count instead the number of degree~$>2$ nodes in a cycle.
This can also be viewed as contracting degree~$2$ nodes with their neighbors and then 
considering again small cycles.
When we give nonzero circulations to these cycles in~$G_1$ 
and again take the union of minimum weight perfect
matchings, these cycles disappear from~$G_1$.
This, in turn, means that the number of degree~$>2$ nodes further reduces by a constant fraction. 
We continue this for $O(\log n)$ rounds until all cycles disappear and  a single perfect matching remains. 
Now, we put all the ingredients together and formally define our weight assignment. 

\comment{
A technical point is that we need to combine weight assignments obtained in different rounds into a single weight assignment. 
We combine them in a way that 
the weight assignment in a later round does not interfere with the order of perfect matchings
given by earlier round weights. 
In other words, the weight assignments in successive rounds are put on a smaller scale (or lower precedence). 
This is important because we do not want the choice of the weight assignment in a round 
to affect the graphs obtained in previous rounds. 
Now, we put all the ingredients together and formally define our weight assignment. 
}
}

\subsection{Constructing the Weight Assignment}
\label{sec:weight}

Let $G(V,E) = G_0$ be bipartite graph with~$n$ nodes that has perfect matchings.
Define $k = \lceil \log n \rceil -1$,  
which is the number of rounds we will need.
We will define subgraphs~$G_i$ and weight assignments~$w_i$,
for $i = 0,1,2, \dots, k-1$,  which will be obtained in successive rounds. 
Note that the shortest cycles have length~$4$. 
Define
\begin{itemize}
\item[$w_i$:] a weight function such that all cycles in~$G_i$ of length~$\leq 2^{i+2}$ 
have nonzero circulations. 
\item[$G_{i+1}$:] the union of minimum weight perfect matchings in~$G_{i}$ according
to weight~$w_{i}$.
\end{itemize}

By the definition of~$G_i$,
any two perfect matchings in~$G_i$ have the same weight,
not only according to~$w_i$, but also to~$w_j$ for all $j < i$, 
for any $1 \leq i \leq k$.

By Lemma~\ref{lem:cyclesZero},
graph~$G_{i}$ does not have any cycles of length $\leq 2^{i+1}$ for each $1 \leq i \leq k$.
In particular, $G_{k}$ does not have any cycles,
since $2^{k+1} \geq n$.
Therefore~$G_k$ has a unique perfect matching.

Our final weight function~$w$ will be a combination of $w_0,w_1, \dots,w_{k-1}$. 
We combine them in a way that 
the weight assignment in a later round does not interfere with the order of perfect matchings
given by earlier round weights. 
Let~$B$ be a number greater than the weight of any edge under any of these weight assignments.
Then, define
\begin{equation}
\label{eq:w}
w = w_0 B^{k-1} + w_1 B^{k-2} + \cdots + w_{k-1} B^0. 
\end{equation}

In the definition of~$w$,
the precedence decreases from~$w_0$ to~$w_{k-1}$.
That is, for any two perfect matchings~$M_1$ and~$M_2$ in~$G_0$,
we have $w(M_1) < w(M_2)$, if and only if there exists an $0 \leq i \leq k-1$ such that
\begin{eqnarray*}
w_j(M_1) &=& w_j(M_2) , \hspace{\eq} \text{for } j < i, \\
 w_i(M_1) &<& w_i(M_2).
\end{eqnarray*}

As a consequence,
the perfect matchings left in~$G_i$ have a 
strictly smaller weight with respect to~$w$ than the ones in~$G_{i-1}$ that did not make it to~$G_i$.

\begin{lemma}
\label{lem:weightM1M2}
For any $1\leq i \leq k$, let~$M_1$ be a perfect matching in~$G_i$
and~$M_2$ be a perfect matching in~$G_{i-1}$ which is not in~$G_i$. 
Then $w(M_1) < w(M_2)$.
\end{lemma}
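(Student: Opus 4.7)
The strategy is to exploit the place-value structure of~$w$ in equation~(\ref{eq:w}): because~$B$ is chosen large enough to dominate every individual~$w_j$-weight of a perfect matching, the ordering induced by~$w$ is lexicographic, with~$w_0$ as the most significant ``digit'' and~$w_{k-1}$ as the least. So to prove $w(M_1)<w(M_2)$, I only need to locate the first index~$j$ at which $w_j(M_1)$ and~$w_j(M_2)$ disagree and check the sign there. I claim this index is exactly $j=i-1$.

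First I would verify equality on the high-order coordinates, that is, $w_j(M_1)=w_j(M_2)$ for every $j\le i-2$. Both~$M_1$ and~$M_2$ are perfect matchings of~$G_{i-1}$, hence also of~$G_{j+1}$ for every such~$j$, using the nested inclusions $G_i\subseteq G_{i-1}\subseteq\cdots\subseteq G_0$. Applying Corollary~\ref{cor:minwtmatchings} to~$G_j$ with weight function~$w_j$, every perfect matching of~$G_{j+1}$ has the same $w_j$-weight, namely the minimum $w_j$-weight over perfect matchings of~$G_j$. In particular, $M_1$ and~$M_2$ share that value.

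Next I would pin down the strict inequality at position~$i-1$. Since~$M_1$ is a perfect matching of~$G_i$, Corollary~\ref{cor:minwtmatchings} gives that $w_{i-1}(M_1)$ equals the minimum $w_{i-1}$-weight over perfect matchings of~$G_{i-1}$. If~$M_2$ also attained this minimum, then by the definition of~$G_i$ as the \emph{union} of all minimum weight perfect matchings of~$G_{i-1}$, every edge of~$M_2$ would lie in~$G_i$, contradicting the hypothesis that~$M_2$ is not a perfect matching of~$G_i$. Therefore $w_{i-1}(M_2)>w_{i-1}(M_1)$, and since the~$w_j$ are integer-valued, $w_{i-1}(M_2)-w_{i-1}(M_1)\ge 1$.

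To conclude, I would plug these observations into
\[
w(M_2)-w(M_1)=\sum_{j=0}^{k-1}\bigl(w_j(M_2)-w_j(M_1)\bigr)B^{k-1-j}.
\]
The terms with $j\le i-2$ vanish, the term $j=i-1$ contributes at least $B^{k-i}$, and the remaining terms with $j\ge i$ sum to strictly less than $B^{k-i}$ in absolute value by a geometric-series bound, thanks to the way~$B$ was chosen. The only step that requires a little care is this final place-value estimate; the two combinatorial inputs---equality of high-order coordinates via Corollary~\ref{cor:minwtmatchings}, and the strict drop at coordinate~$i-1$ via the ``not in~$G_i$'' hypothesis---are each essentially a one-line deduction.
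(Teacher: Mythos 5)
Your proof is correct and takes essentially the same route as the paper: equality of the $w_j$-weights for $j\le i-2$ and strict inequality at $j=i-1$ both follow from Corollary~\ref{cor:minwtmatchings} together with the definition of~$G_i$, after which the place-value structure of~$w$ closes the argument. One minor point in your favor: for the geometric-series bound to yield $\sum_{j\ge i}\lvert w_j(M_2)-w_j(M_1)\rvert\,B^{k-1-j}<B^{k-i}$, the constant~$B$ must dominate the $w_j$-weight of any perfect matching rather than just of any single edge, which is exactly how you phrase it.
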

\begin{proof}
Since~$M_1$ and~$M_2$ are perfect matchings in~$G_{i-1}$,
we have $w_j(M_1) = w_j (M_2)$, for all $j < i-1$, as observed above.
From the definition of~$G_i$ and Corollary~\ref{cor:minwtmatchings}, it follows that 
$w_{i-1}(M_1) < w_{i-1}(M_2).$
 Hence we get that $w(M_1) < w(M_2)$.
\end{proof}

It follows that 
the unique perfect matching in~$G_k$ has a strictly smaller weight with respect to~$w$
than all other perfect matchings.

\begin{corollary}\label{cor:isolatingG_0}
The weight assignment~$w$ defined in~{\rm (\ref{eq:w})} is isolating for~$G_0$.
\end{corollary}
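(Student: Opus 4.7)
The plan is to show that the unique perfect matching in $G_k$ is the unique minimum weight perfect matching of $G_0$ with respect to $w$. Call this unique matching $M^*$. First I would observe the simple chain $G_k \subseteq G_{k-1} \subseteq \cdots \subseteq G_0$: by construction each $G_{i+1}$ is defined as a subgraph of $G_i$ (its edge set is a union of perfect matchings of $G_i$). Consequently $M^*$ is not only a perfect matching of $G_k$ but also a perfect matching of every $G_i$ for $0 \le i \le k$.

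Next I would take an arbitrary perfect matching $M$ of $G_0$ with $M \ne M^*$ and argue that $M$ must ``drop out'' at some stage. Since $G_k$ has only one perfect matching (namely $M^*$), $M$ cannot be a perfect matching of $G_k$. Let $i$ be the smallest index in $\{1,2,\dots,k\}$ such that $M$ is not a perfect matching of $G_i$. Then $M$ is a perfect matching of $G_{i-1}$ (either by minimality of $i$, or because $G_0 = G$ contains $M$ when $i=1$), and by the above observation $M^*$ is a perfect matching of both $G_{i-1}$ and $G_i$.

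Now I would simply invoke Lemma~\ref{lem:weightM1M2} with $M_1 = M^*$ and $M_2 = M$: both lie in $G_{i-1}$, but only $M^* $ survives in $G_i$. The lemma immediately yields $w(M^*) < w(M)$. Since $M$ was an arbitrary perfect matching different from $M^*$, this shows $M^*$ is the strictly unique minimum weight perfect matching of $G_0$ under $w$, i.e., $w$ is isolating.

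There is essentially no obstacle here; the corollary is a direct bookkeeping consequence of the nested structure $G_k \subseteq \cdots \subseteq G_0$, the uniqueness of the perfect matching in $G_k$, and the ``lex order'' property of $w$ captured by Lemma~\ref{lem:weightM1M2}. The only subtle point to state explicitly is that being a perfect matching of $G_{i+1}$ really is a subset condition on edges inherited from $G_i$, which is immediate from the definition of $G_{i+1}$ as the union of perfect matchings of $G_i$.
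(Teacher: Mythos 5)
Your proof is correct and takes exactly the approach the paper intends: the paper simply asserts ``It follows that the unique perfect matching in $G_k$ has a strictly smaller weight with respect to $w$ than all other perfect matchings,'' and you have supplied the routine bookkeeping (the nested chain $G_k \subseteq \cdots \subseteq G_0$, the fact that $M^*$ survives in every $G_i$, the choice of the smallest dropout index, and the appeal to Lemma~\ref{lem:weightM1M2}). No gaps, and no genuinely different ideas.
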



\comment{
Next we want to give a bound on the number~$k$ of rounds we need. 
It is chosen to be the minimum number
such that the number of degree~$>2$ nodes in the graph~$G_{k}$ is less than~$\ell$. 
We show that in each round the number of nodes with degree~$>2$
decreases by half,
and therefore we get $k \leq \log n$.

For some $i \geq 1$, let $U \subseteq V$ be the set of degree~$> 2$ nodes in~$G_{i-1}$.
Now, $w_{i-1}$ assigns nonzero circulation to every cycle in~$G_{i-1}$
which has less than~$\ell$ nodes from~$U$.
From Corollary~\ref{lem:cyclesZero}, the union of minimum weight perfect matchings
(i.e., $G_i$) has none of these cycles.
Now, we want to show that at least half of the nodes in~$U$
have degree~$\leq 2$ in~$G_i$.

\begin{lemma}
\label{lem:degree2marked}
Let $G(V,E)$ be a graph with~$n$ nodes. Let $U \subseteq V$ be its set of degree~$>2$ nodes. 
Let~$G_1(V,E_1)$ be a matching-covered subgraph of~$G$
such that any cycle in~$G_1$ contains at least $\ell = 4 \log n -2$ nodes from~$U$.
Then at least half of the nodes in~$U$ have degree~$\leq 2$ in~$G_1$.
\end{lemma}
\begin{proof}
Let $T = V - U$ be the set of nodes of degree~$\leq 2$ in~$G$.
In the following, by \emph{degree of a node} we mean its degree in~$G_1$.

Observe first that any node of degree~$1$ in~$G_1$ can only be connected to a node of degree~$1$.
That is,
they form a connected component in~$G_1$ that consists of a single edge.
This is because~$G_1$ is matching-covered.
We delete all degree~$1$ nodes from~$G_1$.

For the degree~$2$ nodes in~$T$, let us identify them with one of their two neighbors. 
In more detail, let $(u_0,u_1,u_2, \dots, u_{p+1})$ be a path in~$G_1$, for some $p \geq 1$,
such that the nodes $u_0,u_{p+1}$ are in~$U$ and
 the nodes $u_1,u_2, \dots, u_{p}$ are from~$T$ and have degree~$2$.
Delete the nodes $u_1,u_2, \dots, u_p$ and add an edge $(u_0, u_{p+1})$ as shown in Figure~\ref{fig:contraction}.
Note that~$u_0$ and~$u_{p+1}$ cannot have another path with nodes only coming from~$T$,
because by our assumption, any cycle in~$G_1$ has many nodes from~$U$.

\begin{figure}[htbp]
\begin{center}
\input{contraction.pdf_t}
\caption{A degree $2$ node contracted with its neighbor.}
\label{fig:contraction}
\end{center}
\end{figure}

In summary,
we deleted all nodes in~$T$ and the nodes of degree~1 in~$U$ from~$G_1$.
Let~$G_1'$ be the resulting graph.
Hence,
the nodes in~$G_1'$ are exactly those nodes in~$U$ whose degree is~$> 1$ in~$G_1$. 
Note that the degree of any node in~$G_1'$ is the same as in~$G_1$, as one can see in Figure~\ref{fig:contraction}.

By the assumption of the lemma,
any cycle in~$G_1'$ has length~$\geq \ell$ .
From Corollary~\ref{cor:degree2} we have that~$G_1'$ has average degree~$< 2.5$.
As~$G_1'$ does not have any degree~$1$ nodes, 
it follows that at least half of its nodes have degree~$2$. 
This, in turn, means that at least half the nodes in~$U$ have degree~$\leq2$ in~$G_1$.
\end{proof}

We apply Lemma~\ref{lem:degree2marked} to~$G_{i-1}$ and~$G_i$.

\begin{corollary}
For each $1 \leq i \leq k$, the number of degree~$>2$ nodes in~$G_{i}$
is at most half of that in~$G_{i-1}$.
\end{corollary}

We want a value of~$k$ such that~$G_k$ has fewer than~$\ell$ nodes of degree~$>2$.
Clearly, $k = \log n -1$ suffices. 
}

It remains to bound the values of the weights assigned. 
Let us look at the number of cycles which need to be assigned a nonzero circulation in each round.
In the first round, we give nonzero circulation to all  cycles of length~$4$.
Clearly, the number of such cycles is~$ \leq n^4$.
In the $i$-th round, 
we have graph~$G_i$ that does not have any cycles of length~$\leq 2^{i+1}$.
For~$G_i$, we give nonzero circulation to all cycles of length~$\leq 2^{i+2}$.
By Lemma~\ref{lem:2gCycles}, the number of such cycles is~$\leq n^4$.
Therefore, each~$w_i$ needs to give nonzero circulations to~$\leq n^4$ cycles, 
for $0 \leq i < k$.

Now  we apply Lemma~\ref{lem:smallCycles} with $s = n^4$.
This yields
a set of~$O(n^6)$ weight assignments  with weights  bounded by~$O(n^6)$.
Recall that the number~$B$ used in equation~(\ref{eq:w}) is the highest weight assigned by any~$w_i$.
Hence, we also have $B = O(n^{6})$. 
Therefore
the weights in the assignment~$w$ in equation~(\ref{eq:w}) are bounded by~$B^k = O(n^{6 \log n})$.
That is, the weights have~$O(\log^2 n)$ bits. 

For each~$w_i$ we have~$O(n^6)$ possibilities and we do not know which one would work.
Therefore we try all of them.
In total, we need to try $O(n^{6k}) = O(n^{6 \log n})$ weight assignments. 
This can be done in parallel.

Clearly,
every weight assignment can be constructed in~$\QuasiNC^1$
with circuit size~$2^{O(\log^2 n)}$.

\begin{lemma}
In~$\QuasiNC^1$, one can construct a set of~$O(n^{6 \log n})$ integer weight functions
on $[n/2]\times[n/2]$,
where the weights have~$O(\log^2 n)$ bits, such that 
for any given bipartite graph with~$n$ nodes, 
one of the weight functions is isolating.
\end{lemma}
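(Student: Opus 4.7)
The plan is to assemble the ingredients of Section~\ref{sec:weight} into a single parallel construction. Set $k = \lceil \log n\rceil - 1$. Invoking Lemma~\ref{lem:2gCycles} round by round and using inductively that $G_i$ has no cycle of length $\leq 2^{i+1}$, the number of cycles in $G_i$ of length $\leq 2^{i+2}$ is bounded by $n^4$. Applying Lemma~\ref{lem:smallCycles} with $s = n^4$ then yields, for each round $i$, a family $\mathcal{F}_i$ of $O(n^6)$ candidate weight functions whose entries are bounded by $O(n^6)$, at least one of which is a valid choice of $w_i$.

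Because the correct choice in each round is not known in advance, the construction enumerates in parallel all $|\mathcal{F}_0|\cdots|\mathcal{F}_{k-1}| = O(n^6)^k = O(n^{6\log n})$ tuples $(w_0,\ldots,w_{k-1})$ and, for each, forms the combined weight $w$ via equation~(\ref{eq:w}) with $B = O(n^6)$, a common bound on the entries of every $w_i$. The entries of $w$ are then bounded by $B^k = O(n^{6\log n})$, and therefore require only $O(\log^2 n)$ bits. Corollary~\ref{cor:isolatingG_0} guarantees that the tuple of correct round-choices produces an isolating $w$, so at least one enumerated weight function is isolating for the input graph.

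For the $\QuasiNC^1$ bound, each of the $2^{O(\log^2 n)}$ candidates is indexed by an $O(\log^2 n)$-bit selector: the concatenation of $k$ indices of $O(\log n)$ bits each. Given the selector and an edge $e$, the value $w(e)$ is computed by $k$ lookups of the explicit form $w_i(e) = 2^{\mathrm{idx}(e)-1}\bmod j_i$ furnished by Lemma~\ref{lem:smallCycles}, followed by the fixed linear combination in~(\ref{eq:w}) with precomputed powers $B^{k-1-i}$. All arithmetic stays on $O(\log^2 n)$-bit integers and fits in logarithmic depth, while emitting the full table of $2^{O(\log^2 n)}\cdot O(n^2)$ entries in parallel gives quasi-polynomial size.

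The only delicate point is uniformity: one must check that the selector-to-weight map is computable at $\QuasiNC^1$ complexity. This is not a serious obstacle, since the descriptions of the families $\mathcal{F}_i$ are simple modular-exponential formulas whose parameters are polynomially bounded, so both index decoding and the final combination reduce to standard low-depth integer arithmetic; the main content of the lemma is really the bit-size and candidate-count accounting carried out in the second paragraph.
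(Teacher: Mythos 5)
Your proposal is correct and follows essentially the same route as the paper: bound the number of relevant cycles per round via Lemma~\ref{lem:2gCycles}, instantiate Lemma~\ref{lem:smallCycles} with $s=n^4$ to get $O(n^6)$ candidates per round with $O(n^6)$-bounded weights, enumerate all $O(n^{6\log n})$ tuples and combine each via equation~(\ref{eq:w}), and invoke Corollary~\ref{cor:isolatingG_0} for correctness of the right tuple. The added discussion of selector decoding and modular arithmetic makes the $\QuasiNC^1$ uniformity claim a bit more explicit than the paper's one-line remark, but it is the same argument.
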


With this construction of  weight functions, we can decide the 
existence of a perfect matching in a bipartite graph in~$\QuasiNC^2$ as follows:
Recall the bi-adjacency matrix~$A$ from Section~\ref{sec:mvv}
which has entry~$2^{w(e)}$ for edge~$e$.
We compute~$\det(A)$ for each of the constructed weight functions
in parallel. 
If the given graph has a perfect matching, then one of the weight functions 
isolates a perfect matching.
As we discussed in Section~\ref{sec:mvv}, for this weight function~$\det(A)$ will be nonzero.
When there is no perfect matching,
then~$\det(A)$ will be zero for any weight function.

As our weights have~$O(\log^2 n)$ bits, the determinant entries have quasi-polynomial bits. 
The determinant can still be computed in parallel, 
with circuits of quasi-polynomial size~$2^{O(\log^2 n)}$
by the algorithm of Berkowitz~\cite{Ber84}.
As we need to compute $2^{O(\log^2 n)}$-many determinants in parallel,
our algorithm is in~$\QuasiNC^2$ with circuit size~$2^{O(\log^2 n)}$.

To construct a perfect matching, we follow
the algorithm of Mulmuley et al.~\cite{MVV87} from Section~\ref{sec:mvv} with each of our weight functions.
For a weight function~$w$ which is isolating, 
the algorithm outputs the unique minimum weight perfect matching~$M$.
If we have a weight function~$w'$ which is not isolating,
still~$\det(A)$ might be non-zero with respect to~$w'$.
In this case,
the algorithm computes a set of edges~$M'$ that
might or might not be a perfect matching. 
However,
it is easy to verify if~$M'$ is indeed a perfect matching,
and in this case, we will output~$M'$.
As the algorithm involves computation of similar determinants as in the decision algorithm,
it is in~$\QuasiNC^2$ with circuit size $2^{O(\log^2 n)}$.
This finishes the proof of Theorem~\ref{thm:pm-quasiNC}.


\section{An $\RNC$-Algorithm with Few Random Bits}
\label{sec:RNC}

We can also present our result for bipartite perfect matching in an alternate way.
Instead of $\QuasiNC$, we can get an $\RNC$-circuit
but with only poly-logarithmically many, namely~$O(\log^2 n)$ random bits.
Note that for a complete derandomization,
it would suffice to bring the number of random bits down to~$O(\log n)$.
Then there are only polynomially many random strings which can all be tested in~$\NC$.
Hence we are only one log-factor away from a complete derandomization.

\subsection{Decision Version}
First, let us look at the decision version.


\begin{theorem}\label{thm:pm-logRNC}
For bipartite graphs, there is an $\RNC^2$-algorithm for $\decisionPM$
which uses~$O(\log^2 n)$ random bits. 
\end{theorem}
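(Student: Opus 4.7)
The plan is to follow the deterministic construction from Section~\ref{sec:weight}, but replace the exhaustive enumeration of weight assignments in each round by a random choice. Recall that in round~$i$ we need, via Lemma~\ref{lem:smallCycles}, a modulus that does not divide the product of the $s=n^4$ relevant circulations; since this product is bounded by $2^{n^6}$, it has at most $n^6$ distinct prime divisors. Drawing $j$ uniformly at random from the first $M=n^{10}$ primes keeps the per-round failure probability below $n^6/M = 1/n^4$, at a cost of $O(\log n)$ random bits. Restricting $j$ to primes is essential: if $j$ were allowed to range over arbitrary integers, the number of divisors of the circulation product could be too large to afford a uniform random draw. Using independent draws across the $k = O(\log n)$ rounds, the total failure probability is $o(1)$ and the total number of random bits spent on constructing $w$ is $O(\log^2 n)$.

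The resulting combined weight $w$ from equation~(\ref{eq:w}) has $O(\log^2 n)$-bit values and is isolating for $G_0$ with probability $1-o(1)$, so by Section~\ref{sec:mvv} it suffices to test whether $\det(A)\neq 0$ for $A(i,j)=2^{w(e_{i,j})}$. The main obstacle is that these entries are of quasi-polynomial magnitude, so $\det(A)$ cannot be evaluated in polynomial size directly, which is what the $\QuasiNC^2$ upper bound of Theorem~\ref{thm:pm-quasiNC} trades away for determinism.

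We resolve this by evaluating the determinant modulo a random polynomial-size prime~$q$: since $\abs{\det(A)} \le 2^{n\cdot n^{O(\log n)}}$, the nonzero determinant has at most $n^{O(\log n)}$ distinct prime divisors, so sampling~$q$ uniformly from a set of $n^{O(\log n)}$ primes guarantees $q\nmid\det(A)$ with constant probability, at a cost of another $O(\log^2 n)$ random bits. Each entry $2^{w(e)}\bmod q$ can be computed in $\NC^2$ by first reducing the exponent modulo $q-1$ and then performing repeated squaring, and the determinant modulo~$q$ of the resulting polynomial-bit matrix is in $\NC^2$ by Berkowitz's algorithm~\cite{Ber84}. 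Combining these steps yields an $\RNC^2$-algorithm for $\decisionPM$ on bipartite graphs using $O(\log^2 n)$ random bits overall.
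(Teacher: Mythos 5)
Your first stage is fine: drawing an independent random prime modulus per round is sound and essentially matches what the paper does (its Lemma on random weight assignments). The genuine gap is in the second stage, where you keep the single combined weight $w = w_0 B^{k-1} + \cdots + w_{k-1}$ from equation~(\ref{eq:w}) and try to tame the entries $2^{w(e)}$ by working modulo a random prime $q$. First, $q$ is \emph{not} a polynomial-size prime: by your own count $\det(A)$ can have up to $n^{O(\log n)}$ distinct prime factors, so $q$ must be drawn from a pool of $n^{O(\log n)}$ primes and therefore has $\Theta(\log^2 n)$ bits, i.e.\ quasi-polynomial magnitude. Second, and this is the real obstruction, the claim that $2^{w(e)} \bmod q$ ``can be computed in $\NC^2$ by first reducing the exponent modulo $q-1$ and then performing repeated squaring'' is not substantiated. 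After the reduction the exponent still has $\Theta(\log^2 n)$ bits, so repeated squaring is a chain of $\Theta(\log^2 n)$ \emph{sequential} modular squarings; each squaring operates on $\Theta(\log^2 n)$-bit numbers and hence has depth $\Theta(\log\log n)$ in the bounded fan-in model, giving total depth $\Theta(\log^2 n \cdot \log\log n) = \omega(\log^2 n)$. Replacing the chain by an iterated product of $w(e)$ copies of $2$ costs $2^{\Theta(\log^2 n)}$ size instead. Either way you land above $\RNC^2$, not in it. There is also a secondary issue you leave unaddressed: sampling $q$ requires a polynomial-size, polylog-depth primality test for $\Theta(\log^2 n)$-bit integers, and brute-force trial division at this bit-length is already quasi-polynomial work.

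The paper never forms the quasi-polynomial combined weight at all, and this is the idea you are missing. It assigns a fresh indeterminate $x_i$ to each round and sets $A(i,j) = x_0^{w_0(e)}\,x_1^{w_1(e)}\cdots x_{k-1}^{w_{k-1}(e)}$, so that $\det(A)$ is a polynomial in $k = O(\log n)$ variables with polynomially bounded individual degree. One then argues that $\det(A)$ is nonzero as a polynomial iff $G$ has a perfect matching, by isolating the lexicographically smallest monomial with respect to the precedence order $x_0 \succ x_1 \succ \cdots \succ x_{k-1}$ (this is exactly the role the $B$-scaling played, but now realized symbolically). A standard Schwartz--Zippel test with each $x_i$ drawn from a polynomial range then costs $O(\log n)$ random bits per variable and $O(\log^2 n)$ overall, and — crucially — the evaluated entries have only polynomially many bits, so computing them and $\det(A)$ genuinely fits in $\NC^2$ by Berkowitz's algorithm. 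That replacement of ``one huge modulus'' by ``$\log n$ small variables plus PIT'' is precisely what your route lacks, and is why the paper attains $\RNC^2$ where your argument stops at best at $\RNC^3$.
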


To prove Theorem~\ref{thm:pm-logRNC},
consider our algorithm from Section~\ref{sec:isolation}. 
There are two reasons that we need quasi-polynomially large circuits: 
(i) we need to try quasi-polynomially many different weight assignments
and (ii) each weight assignment has quasi-polynomially large weights. 
We show how to come down to polynomial bounds in both cases by using randomization.

To solve the first problem, we modify Lemma~\ref{lem:smallCycles} to
get a random weight assignment which works with high probability.

\begin{lemma}[\cite{CRS95,KS01}]
\label{lem:smallCyclesrandom}
Let~$G$ be a graph with~$n$ nodes and $s \geq 1$. 
There is a random weight assignment~$w$ 
which uses~$O(\log n s)$ random bits and assigns weights  bounded by~$O(n^3 s \log ns)$,
i.e., with~$O(\log ns)$ bits,
such that for any set of~$s$ cycles, $w$ gives nonzero circulation
to each of the~$s$ cycles with probability at least~$1-1/n$. 
\end{lemma}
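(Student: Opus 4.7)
The plan is to follow the deterministic construction of Lemma~\ref{lem:smallCycles}, but to replace its enumeration over small moduli by a single random prime modulus. As the base weighting I would reuse the exponential assignment $w_0(e_i) = 2^{i-1}$ from that proof, for which every cycle has nonzero circulation and, crucially, any circulation is bounded by $n \cdot 2^{m-1} < 2^{n^2}$. Consequently, for any fixed collection of $s$ cycles $C_1, \dots, C_s$, the product $P = \prod_{i=1}^{s} c_{w_0}(C_i)$ is a nonzero integer less than $2^{s n^2}$, and therefore has fewer than $s n^2$ distinct prime divisors.

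Next I would draw a prime $p$ uniformly at random from the primes in $[2, N]$ with $N = c \cdot n^3 s \log(ns)$ for a suitable constant $c$. By the prime number theorem (or the weaker Chebyshev bounds), the number of primes in this range is at least $\Omega(n^3 s)$, so the probability that $p$ divides $P$, i.e., that $p$ is ``bad'', is at most $s n^2 / \Omega(n^3 s) \leq 1/n$. Whenever $p$ is good, $c_{w_0}(C_i) \not\equiv 0 \pmod{p}$ for every $i$, and so the reduced weight function $w(e) := w_0(e) \bmod p$ has nonzero circulation on each $C_i$. Its weights are bounded by $p \leq N = O(n^3 s \log(ns))$, matching the claimed bit length $O(\log(ns))$, and specifying $p$ requires only $O(\log N) = O(\log(ns))$ random bits.

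The only delicate point the proof must address is how to sample a uniformly random prime in $[2, N]$ using only $O(\log(ns))$ random bits. I would handle this in the standard way: sample a uniform integer in $[2, N]$ and exploit the density of primes guaranteed by the prime number theorem to either accept it (if prime) or reject. The constant failure probability of a single rejection step is suppressed by repeating in parallel a constant number of times, and the residual failure mass is absorbed into the overall $1/n$ bound by choosing $c$ slightly larger. Apart from this sampling step, which is pure bookkeeping, the entire argument reduces to the counting estimate of prime divisors of $P$ given above.
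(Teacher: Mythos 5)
Your proposal is correct and follows essentially the same approach as the paper: reduce the exponential weight function $w_0(e_i) = 2^{i-1}$ modulo a random prime, bound the product of the $s$ circulations by $2^{n^2 s}$ so it has at most $n^2 s$ prime divisors, and pick the prime from a pool of $\Theta(n^3 s)$ primes (so the pool extends to about $n^3 s \log(ns)$) to drive the failure probability below $1/n$. The only point you elaborate beyond the paper is the rejection-sampling construction of a random $O(\log ns)$-bit prime, which the paper delegates to a citation of Klivans--Spielman.
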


\begin{proof}
We follow the construction of Lemma~\ref{lem:smallCycles} and
give exponential weights modulo small numbers. 
Here, we use only prime numbers as moduli.
Recall the weight function $w$ defined by $w(e_i) = 2^{i-1}$.
Let us choose a random number $p$ among the first~$t$ prime numbers.
We take our random weight assignment to be $w \bmod p$.
We want to show that with high probability this weight function gives
nonzero circulation to every cycle in $\{C_1,C_2, \dots, C_s\}$.
In other words, $\prod_{i=1}^s c_w(C_i) \not\equiv 0 \pmod p$.
As the product is bounded by~$2^{n^2 s}$, it has at most~$n^2 s$ prime factors. 
Let us choose $t = n^3 s$.
This would mean that a random prime works with probability at least~$(1-1/n)$.
As the $t$-th prime can only be as large as~$2 t \log t$, 
the weights are bounded by $2 t \log t = O(n^3 s \log ns)$,
and hence have~$O( \log ns)$ bits.
A random prime with~$O(\log ns)$ bits can be constructed using~$O(\log ns)$ random bits (see~\cite{KS01}).
\end{proof}

Recall from  Section~\ref{sec:weight} that
for a bipartite graph~$G$ with~$n$ nodes, 
we had $k = \lceil \log n \rceil -1$ rounds and constructed one weight function in each round.
We do the same here,
however,
we use the random scheme from Lemma~\ref{lem:smallCyclesrandom} to
choose each of the weight functions $w_0,w_1, \dots, w_{k-1}$ independently.
The probability that all of them 
provide nonzero circulation on their respective set of cycles
$\geq 1-k/n \geq 1 - \log n /n$ using the union bound. 

Now, instead of combining them to form a single weight assignment, we use 
a different variable for each weight assignment. 
We modify the construction of matrix~$A$ from Section~\ref{sec:mvv}.
Let $L =\{u_1, u_2, \dots, u_{n/2}\}$ and $R =\{v_1, v_2, \dots, v_{n/2}\}$ be the vertex partition of~$G$.
For variables $x_0, x_1, \dots, x_{k-1}$, define an $n/2 \times n/2$ matrix~$A$ by
$$A(i,j) = \begin{cases}
	x_0^{w_0(e)} x_1^{w_1(e)} \cdots x_{k-1}^{w_{k-1}(e)}, & \text{if } e = (u_i, v_j) \in E, \\

	0, &\text{otherwise.}
\end{cases}$$
From arguments similar to those in Section~\ref{sec:mvv}, one can write
$$\det(A) = \sum_{M \text{ perfect matching in } G} \sign(M)\, x_0^{w_0(M)} x_1^{w_1(M)} \cdots x_{k-1}^{w_{k-1}(M)},$$
 where~$\sign(M)$ is the sign of the corresponding permutation.
From the construction of the weight assignments 
it follows that if the graph has a perfect matching then
 the lexicographically minimum term in~$\det(A)$, with respect to 
the exponents of variables $x_0, x_1, \dots, x_{k-1}$ in this precedence order, comes from a unique
perfect matching.
Thus, we get the following lemma.
\begin{lemma}
 $\det(A) \neq 0 \iff$ $G$ has a perfect matching.
\end{lemma}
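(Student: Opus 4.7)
The plan is to prove both implications of the biconditional by reducing to the isolating property that was already established (multivariately this time) via Corollary \ref{cor:isolatingG_0} and Lemma \ref{lem:weightM1M2}.

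First I would dispatch the easy direction. By the Leibniz expansion
\[
\det(A) = \sum_{\pi \in S_{n/2}} \mathrm{sgn}(\pi)\, \prod_{i=1}^{n/2} A(i,\pi(i)),
\]
each nonzero term requires every $A(i,\pi(i))$ to be nonzero, which in turn requires $(u_i, v_{\pi(i)}) \in E$ for every $i$. Such a $\pi$ is exactly a perfect matching of $G$. So if $G$ has no perfect matching then $\det(A) = 0$ identically.

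For the nontrivial direction, assume $G$ has a perfect matching. I would regard $\det(A)$ as a polynomial in $\mathbb{Z}[x_0, \dots, x_{k-1}]$ and expand it in the monomial form given in the excerpt, namely
\[
\det(A) = \sum_{M} \mathrm{sgn}(M)\, x_0^{w_0(M)} x_1^{w_1(M)} \cdots x_{k-1}^{w_{k-1}(M)}.
\]
The key step is to identify a monomial that appears exactly once. Order the exponent vectors $(w_0(M), w_1(M), \dots, w_{k-1}(M)) \in \mathbb{Z}^k$ lexicographically, with $w_0$ of highest precedence. The construction in Section \ref{sec:weight} was precisely arranged so that this lex order on perfect matchings coincides with the order induced by the combined weight $w = w_0 B^{k-1} + \cdots + w_{k-1} B^0$ whenever $B$ exceeds every individual $w_i(e)$. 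Hence Lemma \ref{lem:weightM1M2} (together with the fact that $G_k$ has a unique perfect matching) guarantees that there is a unique perfect matching $M^*$ whose exponent vector is lex-minimum; equivalently, $w$ is isolating by Corollary \ref{cor:isolatingG_0}.

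Now I would conclude by a cancellation argument. The monomial contributed by $M^*$ is strictly lex-smaller than every monomial contributed by any other perfect matching, so no other term in the sum can cancel it. Its coefficient is $\mathrm{sgn}(M^*) = \pm 1 \neq 0$, so this monomial survives in $\det(A)$ and therefore $\det(A) \neq 0$ as a polynomial. There is no real obstacle here once the isolation result from the previous section is in hand; the only thing to keep track of is that the multivariate lex order on exponent tuples is exactly the scalar order induced by the combined weight $w$, which is immediate from the choice of $B$ in equation (\ref{eq:w}).
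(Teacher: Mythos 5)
Your proof is correct and follows essentially the same route as the paper: the paper's own (one-sentence) justification preceding the lemma is precisely that the lexicographically minimum monomial comes from a unique perfect matching, which is what you unpack via the combined weight $w$ of equation~(\ref{eq:w}) and Corollary~\ref{cor:isolatingG_0}, and the trivial direction is the usual Leibniz-expansion observation. One small note: the equivalence between the lex order on exponent tuples and the scalar order under $w$ needs $B$ to exceed the maximum possible value of $w_i(M)$ (roughly $(n/2)\cdot\max_e w_i(e)$) rather than just the maximum edge weight, but that is a detail of the paper's own setup and does not change your argument; in fact one could bypass $w$ entirely and read off the uniqueness of the lex-minimum directly from the chain $G_0\supseteq G_1\supseteq\cdots\supseteq G_k$ and Corollary~\ref{cor:minwtmatchings}.
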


Recall that each~$w_i$ needs to give nonzero circulations to~$n^{4}$ cycles.
Thus, the weights obtained by the scheme of Lemma~\ref{lem:smallCyclesrandom} will be bounded by~$O(n^7 \log n)$.
This means the weight of a matching will be bounded by~$O(n^8 \log n)$. 
Hence~$\det(A)$ is a polynomial of individual degree~$O(n^8 \log n)$ with~$\log n$ variables.
To test if~$\det(A)$ is nonzero 
one can apply the standard randomized polynomial identity test~\cite{Sch80,Zip79, DL78}.
That is, to plug in random values for variables~$x_{i}$, independently from $\{1,2,\dots,  n^9\}$.
If $\det(A) \neq 0$, then the evaluation is nonzero with high probability.

\paragraph{Number of random bits:}
For a weight assignment~$w_i$, we need~$O(\log n s)$ random bits from Lemma~\ref{lem:smallCyclesrandom}, where $s= n^{4}$.
Thus, the number of random bits required for all~$w_i$'s together is $O(k \log n) = O(\log^2 n)$.
Finally, we need to plug in~$O(\log n)$ random bits for each~$x_{i}$. 
This again requires~$O(\log^2 n)$ random bits. 

\paragraph{Complexity:}
The weight construction involves taking exponential weights modulo small primes by Lemma~\ref{lem:smallCyclesrandom}. 
Primality testing can be done by the brute force algorithm in~$\NC^2$, as the numbers involved have~$O(\log n)$ bits.
Thus, the weight assignments can be constructed in~$\NC^2$. 
Moreover, the determinant with polynomially bounded entries can be computed in~$\NC^2$~\cite{Ber84}.

In summary,
we get an $\RNC^2$-algorithm that uses~$O(\log^2 n)$ random bits as claimed in Theorem~\ref{thm:pm-logRNC}.


\subsection{Search Version}

We get a similar algorithm for $\searchPM$ using also only~$O(\log^2 n)$  random bits.
This improves the $\RNC$-algorithm of Goldwasser and Grossman~\cite{GG15}
based on an earlier version of this paper that uses~$O(\log^4 n)$  random bits.
Their $\RNC$-algorithm has an additional property: it is \emph{pseudo-deterministic},
i.e., it outputs the same perfect matching for almost all choices of random bits.
Our algorithm does not have this property.

\begin{theorem}\label{thm:search-pm-logRNC}
For bipartite graphs, there is an $\RNC^3$-algorithm for $\searchPM$
which uses~$O(\log^2 n)$ random bits. 
\end{theorem}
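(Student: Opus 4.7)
The plan is to reuse the isolating weight scheme from Theorem~\ref{thm:pm-logRNC} and then extract~$M^*$ by mimicking the round structure of Section~\ref{sec:isolation}. Concretely, I sample weights $w_0,w_1,\dots,w_{k-1}$ via Lemma~\ref{lem:smallCyclesrandom} with a boosted failure probability of~$1/n^c$ for a large constant~$c$; this still uses $O(\log^2 n)$ random bits in total. With high probability this sample simultaneously isolates a unique lex-minimum perfect matching in~$G$ and in each subgraph of the form $G_i - u - v$ that the algorithm will inspect, by a union bound over the $O(k n^2)$ such subgraphs.

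Given the weights, I iteratively construct the chain $G_0 = G \supseteq G_1 \supseteq \cdots \supseteq G_k = \{M^*\}$ exactly as in Section~\ref{sec:isolation}. In round~$i$, given~$G_i$, I obtain~$G_{i+1}$ by keeping exactly those edges $e = (u,v) \in G_i$ for which $w_i(e) + W_i(G_i - u - v) = W_i(G_i)$, where $W_i(H)$ denotes the minimum $w_i$-weight of a perfect matching in~$H$. This test is equivalent to saying that~$e$ participates in some minimum $w_i$-weight perfect matching of~$G_i$, so by Corollary~\ref{cor:minwtmatchings} the chain refines until $G_k = \{M^*\}$, which is the output.

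To compute $W_i(H)$ in~$\RNC^2$, I form the bi-adjacency matrix~$A_H$ with entries $A_H(u,v) = \prod_{\ell = i}^{k-1} x_\ell^{w_\ell(e)}$ for $e = (u,v) \in H$ and zero otherwise, and plug in random values $x_\ell = r_\ell$ on $O(\log n)$ bits for every $\ell > i$. The univariate polynomial $p_H(x_i) = \det(A_H)|_{x_\ell = r_\ell,\, \ell > i}$ then has its $x_i^d$-coefficient equal to $\sum_{M \,:\, w_i(M)=d} \sign(M)\prod_{\ell > i} r_\ell^{w_\ell(M)}$. For $d = W_i(H)$, the underlying polynomial in $x_{i+1},\dots,x_{k-1}$ has a unique lex-minimum monomial of coefficient~$\pm 1$ coming from the unique lex-minimum perfect matching of~$H$, so Schwartz-Zippel makes the evaluation nonzero with high probability. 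Hence $W_i(H)$ is the smallest~$d$ for which $p_H$ has a nonzero coefficient; I extract it by evaluating $p_H$ at polynomially many $x_i$-values and interpolating, where each evaluation is a determinant with polynomially bounded integer entries, computable in~$\NC^2$ via Berkowitz~\cite{Ber84}.

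Doing all these determinants in parallel for every candidate subgraph in a round and then chaining $k = O(\log n)$ rounds sequentially gives total depth $O(\log^3 n)$, hence $\RNC^3$. The random-bit tally is $O(\log^2 n)$: $O(\log^2 n)$ for the~$w_i$'s and $O(\log^2 n)$ more for the substitution values~$r_\ell$, which can be reused across rounds via one further union bound. The main obstacle I foresee is the isolation argument on subgraphs: $w_i$ was sampled to handle the cycles appearing in the chain derived from~$G$, whereas correctness here requires it to handle the analogous chain inside each queried $H = G_i - u - v$. This still goes through because the relevant set of cycles in any such~$H$ has size $\leq n^4$ by Lemma~\ref{lem:2gCycles}, and the boosted sampling probability $1 - 1/n^c$ absorbs the union bound over the polynomially many $(H,i)$ pairs without inflating the $O(\log^2 n)$ bit budget.
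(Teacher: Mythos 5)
Your algorithm has the same outline as the paper's: run $k = O(\log n)$ rounds, in each round test each edge by a determinant/PIT computation, and chain the rounds sequentially for depth $O(\log^3 n)$. But the correctness argument you lean on is a genuinely different — and considerably more fragile — route, and as written it has a gap. You aim to reconstruct the \emph{exact} chain $G_0 \supseteq G_1 \supseteq \cdots \supseteq G_k$ of Section~\ref{sec:weight}, which forces you to claim that the sampled weights isolate a unique lex-minimum perfect matching in \emph{every} subgraph $G_i - u - v$. The paper never needs anything like this. It instead constructs graphs $H_r$ and maintains only the weaker invariant $M^* \subseteq H_r \subseteq G_r$, and the fact that the coefficient $c_e$ of $x_r^{w_r^*}$ in $\det(A_e)$ is a nonzero polynomial for every $e \in M^*$ is derived from the \emph{global} uniqueness of $M^*$ in $G$: any matching $M$ in $H_r \subseteq G_r$ with $e \in M$, $w_r(M)=w_r^*$, and the same monomial $\X_{r+1}^{\w(M^*)}$ would have $w_j(M)=w_j(M^*)$ for \emph{all} $j$, forcing $M = M^*$. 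So the paper's argument needs only the single isolation event that the decision algorithm already supplies, whereas you are asking for a large family of simultaneous isolations.

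This matters for two concrete reasons. First, your union bound is over subgraphs $G_i - u - v$, but these are random objects determined by $w_0,\dots,w_{i-1}$, so the bound has to be set up by conditioning round by round (fix $w_0,\dots,w_{j-1}$ to pin down the $j$-th graph in the chain from each $G_i - u - v$, then use independence of $w_j$ over the $\leq n^4$ cycles appearing there); you assert the union bound "absorbs" this but do not argue it, and the set of cycles to be handled grows by a factor of $O(n^2 \log n)$ compared to the decision case. Second, and more seriously, the invariant $\hat G_i = G_i$ is not robust to PIT errors. If even one Schwartz--Zippel test for some $e \notin M^*$ returns zero on a nonzero coefficient, you lose an edge of $G_{i+1}$ and get $\hat G_{i+1} \subsetneq G_{i+1}$; from then on your isolation claim, which concerns $G_{i+1}-u-v$ rather than $\hat G_{i+1}-u-v$, simply does not apply, and the "interpolate for the minimum nonzero coefficient" step can silently return a wrong $W_{i+1}$. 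Repairing this requires an inductive argument that, conditioned on the isolation event, \emph{no} PIT error occurs across all $O(n^2 \log n)$ tests, so that $\hat G_i = G_i$ exactly; you do not supply that induction. The paper sidesteps the entire issue because its invariant $M^* \subseteq H_r \subseteq G_r$ survives false negatives: a lost edge outside $M^*$ merely shrinks $H_{r+1}$ while preserving both containments, and the nonzero-ness of $c_e$ for $e\in M^*$ never relied on $H_r$ being any particular graph beyond $M^*\subseteq H_r \subseteq G_r$. In short, the algorithm is right, the bit count is right, but the justification you give does not quite close, and the closing moves it needs are exactly what the paper's weaker-invariant formulation is designed to avoid.
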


\comment{
We construct the weight assignments $w_0,w_1, \dots, w_{k-1}$ as described in 
Section~\ref{sec:PMRNC} with $O(\log^2 n)$ random bits. 
As discussed earlier, with probability $\geq (1 - \log n /n)$ the weight assignments
provide nonzero circulation to their respective set of cycles and thus the graph $G_k$
has a unique perfect matching. 
}

Let again~$G(V,E)$ be the given bipartite graph
with vertex partition $L =\{u_1, u_2, \dots, u_{n/2}\}$ and $R =\{v_1, v_2, \dots, v_{n/2}\}$.
We construct the weight assignments $w_0,w_1, \dots, w_{k-1}$ as in 
Lemma~\ref{lem:smallCyclesrandom} in the randomized decision version.
Let~$M^*$ be the unique minimum weight perfect matching in~$G$ with respect to the combined weight function~$w$.
Let $w_r(M^*) = w_r^*$, for $0\leq r < k$.

Recall from Section~\ref{sec:weight} the sequence of subgraphs $G_1, G_2,  \dots, G_k$ of~$G = G_0$,
where~$G_{r+1}$ consists of the minimum perfect matchings of~$G_r$ according to weight~$w_r$.
In order to compute~$M^*$,
we would like to actually construct all the graphs $G_1, G_2,  \dots, G_k$.
However,
it is not clear how to achieve this with~$O(\log^2 n)$ random bits.
Instead,
we will construct a sequence of graphs $H_1,H_2, \dots, H_k$
such that~$H_r$ will be a subgraph of~$G_r$, for each $1 \leq r \leq k$.
Furthermore, each~$H_r$ will contain the matching~$M^*$.
Recall that~$G_k$ consists of the unique perfect matching~$M^*$.
Hence, once we have~$H_k = G_k$, we are done.

Let $H_0 = G$ and $0 \leq r < k$.
We describe the $r$-th round. 
Suppose we have constructed the graph~$H_{r}(V,E_r)$ and
want to compute~$H_{r+1}$. 
An edge will appear in~$H_{r+1}$ only if it participates in a matching~$M$
with $w_r(M) = w_r^*$. 
Thus, we will have that~$H_{r+1}$ is a subgraph of~$G_{r+1}$.
For an edge~$e$, let~$\X_r^{\w(e)}$ denote the product 
\[
\X_r^{\w(e)} = x_r^{w_r(e)} x_{r+1}^{w_{r+1}(e)} \cdots x_{k-1}^{w_{k-1}(e)}\, .
\]
For a matching~$M$, the term~$\X_r^{\w(M)}$ is defined similarly.
Let~$N(e)$ denote the set of edges which are neighbors of an edge~$e$ in~$G_r$,
i.e.\ all edges $e'\not= e$ that share an endpoint with~$e$.
For an edge $e \in E_{r}$, define the $n/2 \times n/2$ matrix~$A_{e}$ as
$$A_{e}(i,j) = \begin{cases}
	\X_r^{\w(e')}, & \text{if } e' = (u_i, v_j) \in E_{r} - N(e), \\
	0, &\text{otherwise.}
\end{cases}$$
Note that the matrix~$A_{e}$ has a zero entry for each neighboring edge of~$e$. 
Thus, its determinant is a sum over all perfect matchings which contain~$e$. 
That is, 
\begin{eqnarray*}
\det(A_{e}) &=& \sum_{\substack{ M \text{ pm in } H_r \\ e \in M} } \sign(M)\, \X_r^{\w(M)}\, .
\end{eqnarray*}
Consider the coefficient~$c_{e}$ of~$x_r^{w_r^*}$ in~$\det(A_{e})$, 
$$c_{e} = \sum_{\substack{ M \text{ pm in } H_r \\ w_r(M) = w_r^*, \, e \in M} } \sign(M)\,  \X_{r+1}^{\w(M)}\, .$$
Define the graph~$H_{r+1}$ to be the union of all the edges~$e$ for which 
the polynomial $c_{e} \neq 0$.
We claim that each edge of~$M^*$ appears in~$H_{r+1}$. 
For any edge $e \in M^*$, 
the polynomial~$c_e$ will contain the term~$\X_{r+1}^{\w(M^*)}$.
As the matching~$M^*$ is isolated in~$H_r$ with respect to the weight vector $(w_{r+1}, \dots, w_{k-1})$,
the polynomial~$c_e$ is nonzero. 

For the construction of~$H_{r+1}$,
we need to test if~$c_{e}$ is nonzero,
for each edge~$e$ in~$H_r$. 
As argued above in the decision part, the degree of~$c_{e}$ is~$O(n^8 \log^2 n)$.
We apply the standard zero-test, i.e.,
we plug in random values for the variables $x_{r+1} , \dots, x_{k-1}$ 
independently from $\{1,2, \dots, n^{11}\}$.
The probability that the evaluation will be nonzero is at least $1- O(\log^2 n/n^3)$.
To compute this evaluation, we plug in values of $x_{r+1}, \dots, x_{k-1}$
in~$\det(A_{e})$ and find the coefficient of~$x_r^{w_r^*}$. 
This can be done in $\NC^2$~\cite[Corollary 4.4]{BCP84}.
For all the edges,
we use the same random values for variables $x_{r+1} , \dots, x_{k-1}$ in each identity test. 
The probability that the test works successfully for each edge is at least $1- O(\log^2 n /n)$ by the union bound.
We continue this for~$k$ rounds to find~$H_k$, which is a perfect matching. 

We need again~$O(\log^2 n )$ random bits
for the weight assignments~$w_0, w_1, \dots, w_{k-1}$ and the values for the~$x_{i}$'s. 
Note that we use the same random bits for~$x_i$ in all~$k$ rounds.
This decreases the success probability, which is now at least $1- O(\log^3 n)/n$ by the union bound.

In $\NC^2$,
we can construct the weight assignments and compute the determinants in each round.
As we have $k = O(\log n)$ rounds,
the overall complexity becomes~$\NC^3$.


\section{Extensions and related problems}
\label{sec:extensions}


\subsection{Bipartite Planar Graphs}
\label{sec:bipartiteplanar}

The $\searchPM$ problem already has some known $\NC$-algorithms
in the case of bipartite planar graphs~\cite{MN95,MV00,DKR10}.
The one by Mahajan and Varadarajan~\cite{MV00}
is in $\NC^3$, while the other two are in $\NC^2$. 
Our approach from the previous section can be modified to give an alternate $\NC^3$-algorithm for this case.

The weights in our scheme in Section~\ref{sec:weight} become quasi-polynomial because
we need to combine the different weight functions from~$\log n$ rounds
using a different scale. 
To solve this problem,
we use the fact that in planar graphs,
one can count the number of perfect matchings of a given weight in~$\NC^2$
by the  Pfaffian orientation technique~\cite{Kas67,Vaz89}.
As a consequence,
we can actually construct the graphs~$G_i$ in each round in~$\NC^2$.
Thereby we avoid having to combine the weight functions from different rounds.

In more detail,
in the $i$-th round, we need to compute the union of minimum weight perfect
matchings in~$G_{i-1}$ according to~$w_{i-1}$.
For each edge~$e$, we decide in parallel if
deleting~$e$ reduces the count of minimum weight perfect matchings. 
If yes, then edge~$e$ should be present in~$G_i$.
As it takes~$\log n$ rounds to reach a single perfect matching,
the algorithm is in~$\NC^3$.

\subsection{Weighted perfect matchings and maximum matchings}

A generalization of the perfect matching problem is the
\emph{weighted perfect matching problem} (\weightPM),
where we are given a weighted graph,
and we want to compute a perfect matching of minimum weight.
There is no $\NC$-reduction known from $\weightPM$ to the perfect matching problem.
However, the isolation technique works for this problem as well, when the weights are small integers.
We put the given weights on a higher scale and put the weights constructed by our scheme
in Section~\ref{sec:isolation} on a lower scale. 
This ensures that a minimum weight perfect matching according to the combined weight function
also has minimum weight according to the given weight assignment. 
Our scheme ensures that there is a unique minimum weight perfect matching. 
One can construct this perfect matching following the algorithm of
 Mulmuley et al.~\cite{MVV87} (Section~\ref{sec:mvv}).

\begin{corollary}
For bipartite graphs, $\weightPM$ with quasi-polynomially bounded integer weights is in~$\QuasiNC^2$.
\end{corollary}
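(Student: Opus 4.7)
The plan is to lift Theorem~\ref{thm:pm-quasiNC} to the weighted setting by composing the given weight function with the family of isolating weight functions on two different scales, placing the given weights on the dominant scale so that ties under the given weighting are broken by one of the constructed isolating assignments.

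In detail, let $G(V,E)$ be a bipartite graph with a given weight function $w_{\mathrm{given}} \colon E \to \Z$ whose values are bounded by $2^{\log^{O(1)} n}$, and let $w^{(1)}, \dots, w^{(N)}$ with $N = n^{O(\log n)}$ be the family of weight functions produced by the construction in Section~\ref{sec:weight} underlying Theorem~\ref{thm:pm-quasiNC}, each with $O(\log^2 n)$ bits. I would pick $B$ to be any integer larger than $n \cdot \max_{j,e} w^{(j)}(e)$, so that $w^{(j)}(M) < B$ for every perfect matching $M$ and every $j$; note that $B$ remains $n^{O(\log n)}$. For each $j$, I would then form the combined weight
\[
w^{*(j)}(e) \;=\; B \cdot w_{\mathrm{given}}(e) \,+\, w^{(j)}(e),
\]
so that a minimum-weight perfect matching under $w^{*(j)}$ first minimizes $w_{\mathrm{given}}$ and, among those, minimizes $w^{(j)}$.

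The key step is to show that for some $j$, the combined weight $w^{*(j)}$ isolates. For this, let $E'$ be the union of all minimum $w_{\mathrm{given}}$-weight perfect matchings of $G$, and $G' = G(V,E')$. By Corollary~\ref{cor:minwtmatchings} applied to $(G, w_{\mathrm{given}})$, every perfect matching of $G'$ has weight equal to the minimum, so the minimum $w_{\mathrm{given}}$-weight perfect matchings of $G$ are precisely the perfect matchings of $G'$. Since the family $\{w^{(j)}\}$ from Theorem~\ref{thm:pm-quasiNC} isolates some perfect matching of every bipartite graph on $n$ vertices, there is some $j^*$ for which $w^{(j^*)}$ is isolating for $G'$; the corresponding $w^{*(j^*)}$ then has a unique minimum-weight perfect matching $M^*$ in $G$, which is automatically also of minimum $w_{\mathrm{given}}$-weight.

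For the algorithm itself, I would run the MVV procedure of Section~\ref{sec:mvv} in parallel over all $j \in \{1, \dots, N\}$: build the bi-adjacency matrix with entries $2^{w^{*(j)}(e)}$, compute its determinant and the edge-deletion determinants via Berkowitz's algorithm, and extract a candidate perfect matching $M_j$ using the highest-power-of-$2$ test. Since the entries have $n^{O(\log n)}$ bits, each determinant is a $\QuasiNC^2$ computation of quasi-polynomial size; verifying in $\NC$ that $M_j$ is an actual perfect matching discards spurious $j$'s, and the index $j^*$ above succeeds. The main point to be careful about is the scale: $B$ must dominate the total isolation weight of any matching while keeping $w^{*(j)}$ within quasi-polynomial magnitude. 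Both conditions hold precisely because the given weights are promised to be quasi-polynomially bounded, so the matrix entries remain of quasi-polynomial bit length and the whole procedure fits into $\QuasiNC^2$.
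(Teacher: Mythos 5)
Your proposal is correct and follows essentially the same route the paper sketches: place the given weights on a higher scale via $w^{*(j)}(e) = B\cdot w_{\mathrm{given}}(e) + w^{(j)}(e)$ so that minimizing $w^{*(j)}$ first minimizes $w_{\mathrm{given}}$, then use the fact that some $w^{(j^*)}$ in the family isolates the subgraph $G'$ of minimum-given-weight matchings, and finally run MVV in parallel over all $j$. One tiny point you (like the paper) leave implicit: for a non-isolating $w^{*(j)}$ the MVV extraction can still return a valid perfect matching that is not of minimum $w_{\mathrm{given}}$-weight, so after the perfect-matching check you should additionally output, among all surviving candidates, one minimizing $w_{\mathrm{given}}$; this is a trivial $\NC$ post-processing step whose correctness follows because the candidate from $j^*$ achieves the true minimum.
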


The maximum matching problem asks to find a maximum size matching in a given graph. 
It is well known that the maximum matching problem ($\MM$) is $\NC$-equivalent to the perfect matching problem
(see for example~\cite{GKMT13}). 
The equivalence holds for both decision versions and the construction versions.
The reductions also preserve bipartiteness of the graph.
Thus, we get the following corollary. 
\begin{corollary}
For bipartite graphs, $\MM$ is in~$\QuasiNC^2$.
\end{corollary}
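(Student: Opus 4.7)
The plan is to invoke the NC-equivalence between $\MM$ and $\decisionPM$/$\searchPM$ cited from~\cite{GKMT13} and then compose with Theorem~\ref{thm:pm-quasiNC}. Concretely, I would begin by recalling the standard reduction: given a graph $G$, one can form auxiliary graphs (for instance, for each candidate matching size $k$, a graph obtained by adding $n-2k$ dummy vertices fully connected to one side, or more cleverly by the reductions described in~\cite{GKMT13}) whose perfect matchings correspond exactly to maximum matchings of $G$. Crucially, the reduction is computable in $\NC$ and, as noted in the excerpt, preserves bipartiteness: if $G$ is bipartite, so are all the auxiliary graphs produced.

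Next, I would argue the complexity composition. Since the reduction is in $\NC$ and in particular in $\QuasiNC^2$, and since $\QuasiNC^2$ is closed under composition with itself (uniform circuits of size $n^{O(\log n)}$ and depth $O(\log^2 n)$ compose to circuits of the same asymptotic size and depth, using the fact that $(n^{O(\log n)})^{O(1)} = n^{O(\log n)}$ and depths add), we can plug the output of the reduction into the $\QuasiNC^2$ algorithm for bipartite $\decisionPM$ and $\searchPM$ given by Theorem~\ref{thm:pm-quasiNC}. This yields both decision and search versions of $\MM$ for bipartite graphs in $\QuasiNC^2$.

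The main thing to verify carefully is that the known reduction indeed lives in $\NC$ and preserves bipartiteness, rather than merely being a polynomial-time reduction. For bipartite graphs with parts $L,R$, a simple reduction suffices: parallelly, for every value of $k$ from $\min(|L|,|R|)$ down to $0$, construct the bipartite graph $G_k$ obtained by adding $|R|-k$ new vertices to $L$ and $|L|-k$ new vertices to $R$, with all new vertices connected to every original vertex on the opposite side; then $G$ has a matching of size $k$ iff $G_k$ has a perfect matching. Running the $\QuasiNC^2$ algorithm on all $G_k$ in parallel and taking the largest $k$ that succeeds solves the decision problem, while the search version is handled by invoking the search algorithm on the $G_k$ for the discovered $k$ and restricting the returned perfect matching to the original edges. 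Since there are only $O(n)$ values of $k$ and each auxiliary graph has $O(n)$ vertices, the overall circuit remains of quasi-polynomial size and $O(\log^2 n)$ depth. There is no real obstacle here; the only subtlety is the routine check that the augmentation and restriction steps are in $\NC^1$ and thus do not dominate the depth bound.
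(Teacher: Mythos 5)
Your proposal is correct and follows essentially the same route as the paper: cite the $\NC$-equivalence of $\MM$ and $\decisionPM$/$\searchPM$ from~\cite{GKMT13}, note that the reduction preserves bipartiteness, and compose with Theorem~\ref{thm:pm-quasiNC}. The paper states this in one line without spelling out an explicit reduction; your added dummy-vertex construction is a reasonable instantiation but does not change the argument.
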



\subsection{Other related problems}

There are many problems related to perfect matching (see for example~\cite[Chapter 14 and 15]{KR98}).
We mention some of them.

In a directed graph, a cycle cover is a set of disjoint cycles which covers every vertex.
The \emph{cycle cover problem} asks to decide if a given directed graph has a cycle cover. 
The weighted version is to find a minimum weight cycle cover, in a weighted directed graph. 
There are simple reductions which show that the cycle cover problem is equivalent to 
the bipartite matching problem (see~\cite[Section 15.3]{KR98}). 

\begin{corollary}
Cycle cover and its weighted version with quasi-polynomially bounded weights
are in~$\QuasiNC^2$.
\end{corollary}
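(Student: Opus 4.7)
The plan is to reduce the cycle cover problem (and its weighted version) to bipartite perfect matching in a way that preserves weights and is computable in $\NC$, and then invoke the results already established in this paper.

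First, I would recall the standard reduction from \cite[Section 15.3]{KR98}. Given a directed graph $G(V,E)$ with $|V|=n$, construct a bipartite graph $H$ with vertex partition $L=\{u_1,\dots,u_n\}$ and $R=\{v_1,\dots,v_n\}$, where for each directed edge $(i,j)\in E$ we place an undirected edge $\{u_i,v_j\}$ in $H$. The key observation is a natural bijection between cycle covers of $G$ and perfect matchings of $H$: a perfect matching $M$ of $H$ selects, for each $i$, a unique $j=\pi(i)$ with $\{u_i,v_{\pi(i)}\}\in M$, and $\pi$ is a permutation of $[n]$ whose cycle decomposition gives a cycle cover of $G$. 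Conversely, every cycle cover yields a permutation and hence a perfect matching in $H$. For the weighted version, I would extend the weight function: if $w\colon E\to\Z$ is the weight function on directed edges, define $w'(\{u_i,v_j\}) = w(i,j)$ on the bipartite side. Then under the bijection above the weight of a cycle cover equals the weight of the corresponding perfect matching, so minimum-weight cycle covers correspond to minimum-weight perfect matchings.

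Next, the reduction itself is trivially computable in $\NC^1$, since it is a purely local relabeling of vertices and edges, and the bipartite graph $H$ has the same number of edges as $G$ and twice as many vertices. In particular, quasi-polynomially bounded weights on $G$ remain quasi-polynomially bounded on $H$.

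Finally, I would combine these pieces. For the unweighted cycle cover problem, $H$ has a perfect matching if and only if $G$ has a cycle cover, and by Theorem~\ref{thm:pm-quasiNC} this can be decided and a witness constructed in $\QuasiNC^2$; translating the matching edges back through the reduction recovers a cycle cover in $G$. For the weighted cycle cover problem with quasi-polynomially bounded weights, the weighted perfect matching corollary for bipartite graphs applies to $H$ with the inherited weight function $w'$, placing the problem in $\QuasiNC^2$. There is no real obstacle here; the only point to double-check is that the weight bookkeeping in the reduction preserves the quasi-polynomial bound and that extracting a cycle cover from a perfect matching (reading off the permutation and its cycles) is itself in $\NC$, which is immediate.
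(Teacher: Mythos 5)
Your proposal spells out exactly the standard reduction (directed graph $G$ to the bipartite graph on $L\cup R$ with $\{u_i,v_j\}$ for each arc $(i,j)$, cycle covers $\leftrightarrow$ perfect matchings, weights carried over unchanged) that the paper invokes via the reference to \cite[Section 15.3]{KR98}, followed by an appeal to Theorem~\ref{thm:pm-quasiNC} and the weighted-matching corollary. This is the same argument the paper intends, just written out in full; it is correct.
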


The \emph{tree isomorphism problem} is to decide whether two given trees are isomorphic.
Tree isomorphism is known to be in~$\NC$.
A seemingly harder problem is \emph{subtree isomorphism}:
given two trees~$T_1$ and~$T_2$,
one has to decide whether~$T_1$ is isomorphic to a subtree of~$T_2$.
It has been shown that subtree isomorphism is equivalent to the 
bipartite perfect matching problem via $\NC$-reductions~\cite{KL89} . 

\begin{corollary}
Subtree isomorphism is in~$\QuasiNC$.
\end{corollary}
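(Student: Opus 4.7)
The plan is essentially to chain together two facts: the $\NC$-reduction from subtree isomorphism to bipartite perfect matching due to Kao and Lingas~\cite{KL89}, and Theorem~\ref{thm:pm-quasiNC} of this paper which places bipartite $\decisionPM$ in $\QuasiNC^2$. Since $\QuasiNC \supseteq \QuasiNC^2$, once we argue that the composition is in $\QuasiNC$, we are done.

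First, given an instance $(T_1, T_2)$ of subtree isomorphism with total size~$n$, I would invoke the Kao--Lingas reduction to produce (in $\NC$) an equivalent instance of bipartite perfect matching on a graph~$G$ with a polynomial number~$N = n^{O(1)}$ of vertices. Since the reduction itself is computed by an $\NC$-circuit, both its depth and the size of the output graph~$G$ are under control.

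Next, I would feed~$G$ into the $\QuasiNC^2$ algorithm for bipartite $\decisionPM$ from Theorem~\ref{thm:pm-quasiNC}. That circuit has depth $O(\log^2 N) = O(\log^2 n)$ and size $N^{O(\log N)} = n^{O(\log n)}$. Composing the $\NC$-reduction with this $\QuasiNC^2$ circuit yields a uniform circuit whose total depth is poly-logarithmic in~$n$ and whose total size remains quasi-polynomial in~$n$, so the composition lies in $\QuasiNC$.

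Almost nothing in this argument is subtle; the only point worth confirming is the standard closure observation that $\QuasiNC$ is closed under $\NC$-many-one reductions with polynomially bounded output size. If the Kao--Lingas reduction is actually a Turing reduction making polynomially many queries, the same conclusion holds because all queries can be asked in parallel and each queried instance still has polynomial size, so the resulting circuit is quasi-polynomial in size and poly-logarithmic in depth. There is no genuine obstacle here beyond verifying that the cited reduction fits into this framework, which is immediate from~\cite{KL89}.
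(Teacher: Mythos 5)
Your proposal is correct and follows essentially the same reasoning the paper uses: cite the $\NC$-equivalence between subtree isomorphism and bipartite perfect matching from~\cite{KL89}, then compose with Theorem~\ref{thm:pm-quasiNC}. The extra remark about $\QuasiNC$ being closed under $\NC$ many-one (or parallel-query Turing) reductions with polynomial-size outputs is exactly the routine closure fact the paper leaves implicit.
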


In the \emph{maximum flow problem} we have given a network, a directed graph,
with capacities on the edges, and two nodes~$s$ and~$t$.
The task is to compute a maximum flow from~$s$ to~$t$ in the network.
In general, 
the maximum flow problem is known to be $\P$-complete.
However,
when the capacities are polynomially bounded integers,
then there is an $\NC$-reduction to the bipartite perfect matching problem~\cite{KUW86}. 
When the capacities are quasi-polynomial, the reduction still works, but in~$\QuasiNC$.

\begin{corollary}
Maximum flow with quasi-polynomially bounded integer capacities is in~$\QuasiNC$.
\end{corollary}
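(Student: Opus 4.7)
The plan is to compose the classical $\NC$-reduction of Karp, Upfal, and Wigderson~\cite{KUW86} from integer maximum flow to bipartite perfect matching with Theorem~\ref{thm:pm-quasiNC}. First I would recall the KUW construction: given a flow network on $n$ nodes with source $s$, sink $t$, integer edge capacities $c(e)$, and a target value $k$, one builds a bipartite graph $H$, polynomial in $n+\sum_e c(e)$, whose perfect matchings encode integral $s$-$t$ flows of value $k$. The construction is local and explicit: each edge of capacity $c$ is expanded into a gadget of $O(c)$ vertices and edges. When all $c(e)$ and $k$ are polynomially bounded, $H$ has polynomial size and is constructible in $\NC$, which recovers the original KUW reduction.

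Next I would trace what happens when capacities and the target value are only quasi-polynomially bounded, i.e.\ bounded by $2^{\log^{O(1)} n}$. Then $H$ has quasi-polynomial size $N = 2^{\log^{O(1)} n}$; every vertex and every edge of $H$ carries a label of $\log^{O(1)} n$ bits describing which gadget it belongs to and its position within that gadget, and these labels together with the adjacency relation are computable coordinate-wise in poly-logarithmic depth from the binary encodings of the $c(e)$'s. Thus the KUW construction becomes a uniform $\QuasiNC$-reduction rather than an $\NC$-reduction: there is a uniform circuit of size $2^{\log^{O(1)} n}$ and depth $\log^{O(1)} n$ that produces $H$ from the input network.

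Finally I would apply Theorem~\ref{thm:pm-quasiNC} to $H$. Bipartite perfect matching on an input of size $N$ is solvable by uniform circuits of size $N^{O(\log N)}$ and depth $O(\log^2 N)$. Substituting $N = 2^{\log^{O(1)} n}$ gives size $2^{\log^{O(1)} n}$ and depth $\log^{O(1)} n$ as functions of the original parameter $n$, which is exactly $\QuasiNC$. The composition is therefore in $\QuasiNC$, proving the corollary.

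The main thing to verify, and essentially the only nontrivial point, is that $\QuasiNC$ is closed under this kind of composition: feeding the output of a $\QuasiNC$ reduction of quasi-polynomial blow-up into a $\QuasiNC$ algorithm. This reduces to the identity $(2^{\log^{c} n})^{\log^{d} n} = 2^{\log^{c+d} n}$ for size and $\log^{O(1)}(2^{\log^{O(1)} n}) = \log^{O(1)} n$ for depth, both of which stay in $\QuasiNC$; no issue arises because the exponent of $\log n$ in depth and in the size-exponent remain constants. Everything else is routine bookkeeping on the uniformity of the KUW reduction when capacities are written in binary.
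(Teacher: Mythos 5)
Your proposal is correct and coincides with the paper's own (very terse) justification: cite the Karp--Upfal--Wigderson $\NC$-reduction, observe that with quasi-polynomially bounded capacities the gadget graph has quasi-polynomial size and is still uniformly constructible, and compose with Theorem~\ref{thm:pm-quasiNC}. The closure-under-composition bookkeeping you spell out at the end is exactly what the paper leaves implicit in the phrase ``the reduction still works, but in $\QuasiNC$.''
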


 Given a directed graph~$G$ and a node~$s$ of~$G$,
the \emph{depth-first search tree problem} is to construct a tree within~$G$ with root~$s$
that corresponds to conducting a depth-first search of~$G$ starting from~$s$.
There is an $\NC$-reduction to bipartite $\weightPM$ with polynomially bounded weights~\cite{AAK90,AA87}.

\begin{corollary}
A depth-first search tree can be constructed in~$\QuasiNC$.
\end{corollary}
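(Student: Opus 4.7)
The plan is simply to compose the known $\NC$-reduction with the $\QuasiNC^2$ algorithm for weighted bipartite matching that we have just established. More precisely, I would invoke the result of Aggarwal, Anderson, and Kao~\cite{AAK90,AA87}, which produces, from a directed graph $G$ and a source $s$, a bipartite graph $G'$ together with an integer weight function $w'$ of polynomially bounded magnitude, such that any minimum weight perfect matching of $(G', w')$ encodes (in an $\NC$-recoverable way) a depth-first search tree of $G$ rooted at $s$.

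Next, I would feed $(G', w')$ into the algorithm underlying the previous corollary, which solves bipartite $\weightPM$ with quasi-polynomially bounded integer weights in $\QuasiNC^2$. Since polynomially bounded weights are a special case of quasi-polynomially bounded weights, this applies directly and produces a minimum weight perfect matching $M^*$ of $G'$. Finally, I would run the $\NC$ back-translation from the reduction on $M^*$ to recover the desired depth-first search tree of $G$ from $s$.

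To bound the complexity, I would observe that $\QuasiNC$ is closed under $\NC$-computable pre- and post-processing: composing an $\NC^{O(1)}$ reduction with a $\QuasiNC^2$ oracle call still yields circuits of quasi-polynomial size and poly-logarithmic depth. Thus the whole procedure lives in $\QuasiNC$, giving the corollary. There is essentially no obstacle here beyond verifying that the reduction from~\cite{AAK90,AA87} indeed outputs a bipartite instance of \emph{weighted} perfect matching with integer (polynomially bounded) weights, so that the hypothesis of the earlier corollary is met; once that is checked, the proof is a one-line composition.
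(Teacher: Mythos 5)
Your proof is correct and matches the paper's approach exactly: the paper likewise cites the $\NC$-reduction of~\cite{AAK90,AA87} from depth-first search tree construction to bipartite $\weightPM$ with polynomially bounded integer weights and composes it with the $\QuasiNC^2$ algorithm for bipartite $\weightPM$ established in the preceding corollary.
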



\section*{Discussion}

The major open question remains whether one can do isolation with \emph{polynomially bounded} weights. 
Our construction requires quasi-polynomial weights because
it takes $\log n$ rounds to reach a unique perfect matching and the graphs 
obtained in the successive rounds cannot be constructed. 
To get polynomially bounded weights one needs to circumvent this. 

For non-bipartite graphs, the isolation question is open even in the planar case. 
For this case, our approach fails in its first step:
Corollary~\ref{cor:minwtmatchings} no longer holds as demonstrated in Figure~\ref{fig:nonbipartite}.
Can one assign weights in a way which ensures that the union of minimum weight perfect
matchings is significantly smaller than
the original graph?

It needs to be investigated if our ideas can lead to isolation in other objects.
For example, isolation of paths in a directed graph, which is related to the~$\NL$ versus~$\UL$ question.
 
\subsection*{Acknowledgements}
We would like to thank Manindra Agrawal and Nitin Saxena for their constant encouragement and very helpful discussions.
We thank Arpita Korwar for discussions on some techniques used in Section~\ref{sec:RNC},
and Jacobo Tor{\'a}n for discussions on the number of shortest cycles.
\bibliographystyle{alpha}
\bibliography{matchingQuasiNC}


\end{document}